\def\ps@headings{%
\def\@oddhead{\mbox{}\scriptsize\rightmark \hfil \thepage}%
\def\@evenhead{\scriptsize\thepage \hfil \leftmark\mbox{}}%
\def\@oddfoot{}%
\def\@evenfoot{}}
\makeatother \pagestyle{headings}
\def\inline#1:{\par\vskip 7pt\noindent{\bf #1:}\hskip 10pt}
\long\def\comment #1\commentend{}
\long\def\commabs #1\commabsend{}
\long\def\commful #1\commfulend{#1}
\def\greedya{\mathtt{Greedy}}
\def\greedyb{\mathtt{Greedy\_shortest}}
\def\competa{\mathtt{BatchAll}}
\def\competb{\mathtt{BatchLim}}
\def\competc{\mathtt{BatchAllDisp}}
\def\competd{\mathtt{BatchLimDisp}}
\def\epsi{\varepsilon}
\newtheorem{theorem}{Theorem}[section]
\newtheorem{lemma}[theorem]{Lemma}
\newtheorem{corollary}[theorem]{Corollary}
\def\blackslug{\hbox{\hskip 1pt \vrule width 4pt height 8pt depth 1.5pt
                 \hskip 1pt}}
\def\QED{\quad\blackslug\lower 8.5pt\null\par}
\renewcommand{\paragraph}[1]{\par\noindent\textbf{#1.}}
\newcommand{\ignore}[1]{}
\title{Throughput Optimal On-Line Algorithms for Advanced
Resource Reservation in Ultra High-Speed Networks}
\author{\authorblockN{Reuven Cohen$^{1,2}$, Niloofar Fazlollahi$^1$ and David Starobinski$^1$}
\authorblockA{$^1$Dept. of Electrical and Computer Engineering\\
Boston University, Boston, MA 02215\\
$^2$Dept. of Physics, MIT, Cambridge, MA 02139\\
Email: reuven@mit.edu,\{nfazl,staro\}@bu.edu}}
\begin{document}

\date{\today}
\maketitle

\begin{abstract}
Advanced channel reservation is emerging as an important feature
of ultra high-speed networks requiring the transfer of large
files. Applications include scientific data transfers and database
backup. In this paper, we present two new, on-line algorithms for
advanced reservation, called $\competa$ and $\competb$, that are
guaranteed to achieve optimal throughput performance, based on
multi-commodity flow arguments. Both algorithms are shown to have
polynomial-time complexity and provable bounds on the maximum
delay for $1+\epsi$ bandwidth augmented networks. The $\competb$
algorithm returns the completion time of a connection immediately
as a request is placed, but at the expense of a slightly looser
competitive ratio than that of $\competa$. We also present a
simple approach that limits the number of parallel paths used by
the algorithms while provably bounding the maximum reduction
factor in the transmission throughput. We show that, although the
number of different paths can be exponentially large, the actual
number of paths needed to approximate the flow is quite small and
proportional to the number of edges in the network. Simulations
for a number of topologies show that, in practice, 3 to 5 parallel
paths are sufficient to achieve close to optimal performance. The
performance of the competitive algorithms are also compared to a
greedy benchmark, both through analysis and simulation.
\end{abstract}

%%%%%%%%%%%%%%%%%%%%%%
\section{Introduction}
%%%%%%%%%%%%%%%%%%%%%%
TCP/IP has been so far
considered and used as the core infrastructure for all sorts of data
transfer including bulk FTP applications. Yet, it has recently been
observed that in ultra high speed networks, there exists a large gap
between the capacity of network links and the maximum end-to-end
throughput achieved by TCP \cite{doereport,RWCW05}. This gap, which
is mainly attributed to the shared nature of Internet traffic, is
becoming increasingly problematic for modern Grid and data backup
applications requiring transfer of extremely large datasets on the
orders of terabytes and more.

These limitations have raised efforts for development of an
alternative protocol stack to complement TCP/IP. This protocol stack
is based on the concept of \emph{advanced reservation}
\cite{RWCW05,CFS06}, and is specifically tailored for large file
transfers and other high throughput applications. The most important
property of advanced reservation is to offer hosts and users the
ability to reserve in advance \emph{dedicated} paths to connect
their resources.
The importance of advanced reservation in supporting dedicated
connections has been made evident by the growing number of
testbeds related to this technology, such as UltraScience
Net~\cite{RWCW05}, On-demand Secure Circuits and Advanced
Reservation Systems (OSCARS)~\cite{oscar}, and Dynamic Resource
Allocation via GMPLS Optical Networks (DRAGON)~\cite{dragon}.

Several protocols and algorithms have been proposed in the
literature to support advance reservation, see,
e.g.,~\cite{RWCW05,GO00,CFS06} and references therein. To the
authors' knowledge, none of them guarantees the same throughput
performance as  an optimal off-line algorithm. Instead, most are
based on greedy approaches, whereas each request is allocated a
path guaranteeing the earliest completion time at the time the
request is placed.

Our first contribution in this paper is to uncover fundamental
limitations of greedy algorithms. Specifically, we show that there
exist network topologies and request patterns for which the maximum
 throughput of these algorithms can be arbitrarily
smaller than the optimal throughput, as the network size grows.

Next, we present competitive advanced reservation algorithms that
provably achieve the maximum throughput for any network and
request profile. The first algorithm, called $\competa$, provides
a competitive ratio on the maximum delay experienced by each
request in an augmented resource network with respect to the
optimal off-line algorithm in the original network. $\competa$
groups requests into batches. In each batch, paths are efficiently
assigned based on a maximum concurrent flow optimization. All the
requests arriving during a current batch must wait until its
completion before being assigned to the next batch.

The $\competa$ algorithm does not return the connection completion
time to the user at the time a request is placed, but only when the
connection actually starts. Our second algorithm, called $\competb$,
returns the completion time immediately as a request is placed, but
at the expense of a slightly looser competitive ratio. This
algorithm operates by limiting the length of each batch.

The presented competitive algorithms are based upon multicommodity
flow algorithms, and therefore can be performed efficiently in
polynomial time in the size of the network for each request.

Our model assumes that the network infrastructure supports path
dispersion, i.e., multiple paths in parallel can be used to route
data. Obviously, a too large path dispersion may be undesirable,
as it may entail fragmenting a file into a large number of
segments and reassembling them at the destination. To address this
issue, we present a simple approach, based on the max-flow min-cut
theorem, that limits the number of parallel paths while bounding
the maximum reduction factor in the transmission throughput. We
prove that this bound is tight. We, then, propose two algorithms
$\competc$ and $\competd$, based upon $\competa$ and $\competb$
respectively. These algorithms perform similarly to the original
algorithms, in terms of the batching process. However, after
filling each batch, the algorithms will limit the dispersion of
each flow. Although these algorithms are not throughput optimal
anymore, they are still throughput competitive.

While the main emphasis of this paper is on the derivation of
theoretical performance guarantees on the maximum throughput of
advanced reservation protocols, we also provide simulation results
illustrating the performance of these protocols in terms of the
average delay. The simulations show that the proposed competitive
algorithms compare favorably to a greedy benchmark, although the
benchmark does sometime exhibit superior performance, especially
in sparse topologies or at low load. With respect to path
dispersion, we show that excellent performance can be achieved
with as few as five or so parallel paths per connection.

Note that preliminary findings leading to this work were reported
in a two-page abstract~\cite{FCS07b}. The only overlap is
Lemma~\ref{lemma:multicomm} and Theorem~\ref{thm:compet}, which
were presented without proof.

This paper is organized as follows. In Section \ref{sec:related}, we
briefly scan the related work on advanced reservation, competitive
approaches,
and path dispersion. 
In Section \ref{sec:greedy}, we introduce our model and notation
used throughout the paper. We also present natural greedy
approaches and demonstrate their inefficiency. In
Section~\ref{sec:comp}, we describe the $\competa$ and $\competb$
algorithms and derive results on their competitive ratios. Our
method for bounding path dispersion is described and analyzed in
Section~\ref{sec:dispersion}. In Section \ref{sec:simulations},
simulation results evaluating the performance of some of our
algorithms under different network topologies and traffic
parameters are presented. We conclude the paper in Section
\ref{sec:summary}.

%%%%%%%%%%%%%%%%%%%%%%%%%%%%%%%%%%%%%%%%%%%%%%%
\section{Related Work}
\label{sec:related}
%%%%%%%%%%%%%%%%%%%%%%%%%%%%%%%%%%%%%%%%%%%%%%%
Competitive algorithms for advanced reservation networks is the focus of
\cite{tardos}. This work discusses the lazy ftp problem, where
reservations are made for channels for the transfer of different files.
The algorithm presented there provides a 4-competitive algorithm for the
makespan (the total completion time). However, \cite{tardos} focuses on the
case of fixed routes. When routing is also to be considered, the time complexity
of the algorithm presented there may be exponential in the network size.

Another recent work~\cite{ST05}, focusing on routing in packet
switched network in an adversarial setting, discusses choosing
routes for fixed size packets injected by an adversary. It
enforces regularity limitations on the adversary which are
stronger than the ones required here, and achieves the network
capacity with a guarantee on the maximum queue size. It does not
discuss the case of advance reservation with different job sizes
or bandwidth requirements. It is based upon approximating an
integer programming, which may not be extensible to a case where
path reservation, rather than packet-based routing is involved.

Most of the works regarding competitive approaches to routing
focused mainly on call admission, without the ability of advance
reservation. For some results in this field see, e.g.,
\cite{ANR92,AAFLS01}. Some results involving advanced reservation
are presented in~\cite{AdmissionCtrl}. However, the path selection
there is based on several alternatives supplied by a user in the
request rather than a path selection using an automated mechanism
attempting to optimize performance, as discussed here. In
\cite{Plotkin95} a combination of call admission and circuit
switching is used to obtain a routing scheme with a logarithmic
competitive ratio on the total revenue received. A competitive
routing scheme in terms of the number of failed routes in the
setting of ad-hoc networks is presented in \cite{AwerbuchHRK05}. A
survey of on-line routing results is presented in
\cite{Leonardi96}.  A competitive algorithm for admission and
routing in a multicasting setting is presented in \cite{GoelHP05}.
Most of the other existing work in this area consists of heuristic
approaches  which main emphasis are on the algorithm correctness
and computational complexity, without throughput guarantees.

In \cite{GKS03} a rate achieving scheme for packet switching
at the switch level is presented. Their scheme is
based on convergence to the optimal
multicommodity flow using delayed decision for queued packets.
Their results somewhat resemble our
$\competa$ algorithm. However, their scheme depends on
the existence of an average packet size, whereas our scheme
addresses the full routing-scheduling question for any size
distribution and any (adversarial) arrival schedule. In
\cite{WCM06} queuing analysis of several optical transport
network architectures is conducted, and it is shown that
under some conditions on the arrival process, some of the
schemes can achieve the maximum network rate. As the previous one, this
paper does not discuss the full routing-scheduling question discussed
here and does not handle unbounded job sizes.
Another difference is that our paper provides an algorithm, $\competb$
discussed below, guaranteeing the ending time of a job at the time of
arrival, which, as far as we know is the first such algorithm.

Many papers have discussed the issue of path dispersion and
attempted to achieve good throughput with limited dispersion, a
survey of some results in this field is given in \cite{GK97}. In
\cite{SSC02,CLL01} heuristic methods of controlling multipath
routing and some quantitative measures are presented. As far as we
know, our work proposes the first formal treatment allowing the
approximation of a flow using a limited number of paths with any
desired accuracy.

% ----------------------------------------------------------------
\section{Greedy Algorithms and their Limitations}
\label{sec:greedy}
% ----------------------------------------------------------------

\subsection{Network Model}

We first present the network model that will be used throughout
the paper (both for the greedy and competitive algorithms). The
model consists of a general  network topology, represented by a
graph $G(V,E)$, where $V$ is the set of nodes and $E$ is the set
of links connecting the nodes. The graph $G$ can be directed or
undirected. The capacity of each link $e \in E$ is $C(e)$.
A connection request, also referred to as job, contains the tuple $(s,d,f)$, where $s \in V$ is the
source node, $d \in V-\{ s \}$ is the destination node, and $f$ is
the file size.

 Accordingly, an advance reservation algorithm
computes a starting time at which the connection can be initiated, a
set of paths used for the connection, and an amount of bandwidth
allocated to each path. Our model supports path dispersion, i.e.,
multiple paths in parallel can be used to route data (in section
\ref{sec:dispersion-alg}, we discuss practical methods to bound path
dispersion while still providing performance guarantees). In
addition, the greedy algorithms introduced below employ path
switching, whereby a connection is allowed to switch between
different paths throughout its duration.

In subsequent sections, we will make frequent use of multicommodity
functions. The multicommodity flow problem is a linear planning
problem returning $\mathtt{true}$ or $\mathtt{false}$ based upon the
feasibility of transmitting concurrent flows from all pairs of
sources and destination during a given time duration, such that the
total flow through each link does not exceed its capacity. It is
solved by a function $\mathtt{multicomm}(G,L,T)$, where $L$ is a
list of jobs, each containing a source, a destination, and a file
size, and $T$ is the time duration.

In the sequel of this paper, we will be interested in
characterizing the saturation throughput of the various
algorithms, which is defined as the maximum arrival rate of
requests such that the average delay experienced by requests is
still finite (delay is defined as the time elapsing between the
arrival of a connection request until the completion of the
corresponding connection).

The maximum concurrent flow is calculated by the function
$\mathtt{maxflow}(G,L)$. It returns $T_\mathrm{min}$, the minimum
value of $T$ such that $\mathtt{multicomm}(G,L,T
)$ returns
$\mathtt{true}$. Both the multicommodity and maximum concurrent flow
problems are known to be computable in polynomial
time~\cite{mcfp,multi-commodity}.

\subsection{Algorithms}

\paragraph{$\greedya$}

A seemingly natural way to implement advanced channel reservation
is to follow a greedy procedure, where, at the time a request is
placed, the request is allocated a path (or set of paths)
guaranteeing the earliest possible completion time. We refer to
this approach as $\greedya$ and explain it next.

The $\greedya$ algorithm divides the time axis into slots delineated
by event. Each event corresponds to a set up or tear down instance
of a connection. Therefore, during each time slot the state of all
links in the network remains unchanged. In general, the time axis
will consist of $n$ time slots, where $n \ge 1$ is a variable and
slot $i$ corresponds to  time interval $[t_i,t_{i+1}]$. Note that
$t_1=t$ (the time at which the current request is placed) and
$t_{n+1}=\infty$.

Let $W_i = {\{ b_i(1), b_i(2),\ldots, b_i(|E|) }\}$  be the vector
of reserved bandwidth on all links at time slot $i$ where
$i=1,\ldots,n$, and $b_i(e)$
 denote the reserved bandwidth on link $e$ during slot $i$, with $e=1,\ldots,|E|$.

For each slot $i \in L$, construct a graph $G_i$, where the
capacity of link $e \in E$ is $C_i(e)=C(e)-b_i(e)$, i.e., $C_i(e)$
represents the available bandwidth on link $e$ during slot $i$.

In order to guarantee the earliest completion time, $\greedya$
repeatedly performs a max flow allocation between nodes $s$ and
$d$, for as many time slots as needed until the entire file is
transferred. This approach ensures that, in each time slot, the
maximum possible number of bits is transmitted and, hence, the
earliest possible completion time (at the time the request is
placed) is achieved. The $\greedya$ algorithm can thus be
concisely described with the following pseudo-code:

\begin{enumerate}
\item Initialization
\begin{itemize}
\item Set initial time slot: $i$=1.
\item Set initial size of remaining file: $r=f$.
\end{itemize}
\item If
$\mathtt{maxflow}(G_i,s,d,r) \geq t_{i+1}-t_i$ (i.e., all the
remaining file can be transferred during the current time slot) then
\item Exit step:
\begin{itemize}
\item Update $W_i$ by subtracting the used bandwidth from every link in the new
flow and, if needed, create a new event $t_{i+1}$ corresponding to
end of connection.
\item Exit procedure.
\end{itemize}
else \item Non-exit step:
\begin{itemize}
\item Update $W_i$ by subtracting the used bandwidth from every link in the new flow.
\item $r = r
-r\times(t_{i+1}-t_i)/\mathtt{maxflow}(G_i,s,d,r)$ (update size of
remaining file).
\item $i = i+1$ (advance to the next time slot).
 \item Go back to step 2.
\end{itemize}
\end{enumerate}

\paragraph{$\greedyb$}

The $\greedyb$ algorithm is a variation of $\greedya$, where only
shortest paths (in terms of number of hops) between the source $s$
and destination $d$ are utilized to route data. To implement
$\greedyb$, we employ exactly the same procedure as in $\greedya$,
except that we prune all the links not belonging to one of the
shortest paths using breadth first search. Note that for a given
source $s$ and destination $d$, the pruned links are the same for
all the graphs $G_i$.

\subsection{Inefficiency Results}

We next constructively show that there exist certain arrival
patterns, for which both $\greedya$ and $\greedyb$ achieve a
saturation throughput significantly lower than optimal.
Specifically, we present cases where the saturation throughput of
$\greedya$ and $\greedyb$ is $\Omega(|V|)$ times smaller than the
optimal throughput. Thus, the ratio of the maximum throughput of the
greedy algorithms to that of the optimal algorithm can be
arbitrarily small.

\begin{theorem}
For any given vertex set with cardinality $|V|$, there exists a graph
$G(V,E)$ such that the saturation throughput of $\greedya$ is
$|V|/2$ times smaller than the optimal saturation throughput.
\end{theorem}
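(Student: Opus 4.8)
The plan is to construct an explicit family of graphs, parametrized by $|V|$, on which $\greedya$ is provably suboptimal by the claimed factor. The natural candidate is a graph with two ``parallel'' routes between a common source $s$ and destination $d$: one \emph{short} route consisting of a single direct link (or very few hops), and one \emph{long} route that is a path traversing $\Theta(|V|)$ intermediate nodes. The key point is that $\greedya$, by always performing a max-flow allocation to minimize the completion time of the current request, will insist on using \emph{both} routes simultaneously for every job; using the long route, however, ties up $\Theta(|V|)$ links for the duration of the transfer, even though it only contributes a small additive amount to the instantaneous rate. An optimal off-line algorithm, by contrast, can route every job exclusively on the short link, leaving the long route entirely free, and thereby sustain a much higher arrival rate.

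The steps I would carry out are as follows. First, fix the topology: let all links have capacity $1$, let the short route be the single edge $(s,d)$, and let the long route be a simple path $s = v_0 \to v_1 \to \cdots \to v_{k} \to d$ with $k = \Theta(|V|)$, so that the graph has $|V| = k+2$ nodes. Second, choose a request profile: a stream of identical jobs $(s,d,f)$ arriving one at a time, spaced so that each job must be fully served before (or essentially before) the next is admitted — i.e., tune $f$ and the inter-arrival pattern so that the network is effectively handling one job at a time in both algorithms. Third, compute the service time under $\greedya$: with both routes used, the instantaneous end-to-end rate is $1 + \delta$ for some small $\delta$ (the bottleneck rate of the long path, which here is also $1$, but the relevant quantity is how long the long-path links stay occupied), so a job of size $f$ occupies the long-path edges for time $f/(1+1) = f/2$ --- wait, more carefully: I would actually make the long path's bottleneck small, e.g.\ by giving its edges capacity $\epsi$, so the long route adds only $\epsi$ to the rate but holds $k$ links for time $\approx f/(1+\epsi) \approx f$; then argue that $\greedya$'s throughput is limited by the long-route links being the system bottleneck. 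Fourth, compute the optimal throughput: route everything on the short edge $(s,d)$, achieving rate $1$ per job and leaving all other links idle, so the saturation throughput is $1/f$ per unit of work --- actually the cleanest statement is to compare the \emph{number of jobs per unit time} the two schemes sustain, and show the ratio is $k/2 = |V|/2 - 1$, then absorb the additive constant into the construction (e.g.\ by a slightly different node count, or by stating the bound as $|V|/2$ after rounding).

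The cleanest version may instead exploit \emph{contention among jobs}: have $k$ source–destination pairs $(s_j,d_j)$ each with a dedicated short link, all of whose long alternate routes pass through one shared bottleneck edge; then $\greedya$ forces every job onto the shared edge (reducing its effective parallelism to $1$) while the optimum uses the $k$ disjoint short links in parallel, giving a factor-$k$ gap directly. I would pick whichever construction makes the arithmetic land exactly on $|V|/2$. In either case, the remaining work is (i) to verify rigorously that $\greedya$'s max-flow step really does spread flow onto the inefficient route — this follows because any max flow between $s$ and $d$ saturates every $s$–$d$ cut, hence uses all edge-disjoint routes — and (ii) to formalize ``saturation throughput'' for both schemes via a queueing/stability argument: show the arrival rate can be pushed up to the reciprocal of the per-job bottleneck-occupancy time while keeping delay finite, and no higher.

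The main obstacle I anticipate is item (ii): turning the per-job bandwidth-occupancy comparison into a rigorous statement about \emph{saturation throughput} (finite average delay) rather than just worst-case completion time of a single request. One must argue that under $\greedya$ the bottleneck links form a persistent point of congestion whose utilization reaches $1$ at arrival rate $\lambda = 2/(|V| f)$-ish, so delay blows up beyond that, while under the optimal schedule the corresponding links never exceed utilization below $1$ until $\lambda$ is $|V|/2$ times larger. Making the arrival process concrete (e.g.\ periodic, or a simple renewal process) and invoking a straightforward work-conservation / Loynes-type stability argument should suffice, but getting the constant to be exactly $|V|/2$ — as opposed to $\Theta(|V|)$ — will require care in choosing the number of intermediate nodes and the link capacities.
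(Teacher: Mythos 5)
Your primary construction (a single source--destination pair with one short route and one long route) does not work, and this is a genuine gap rather than a detail to be tuned. With only one commodity, the saturation throughput of \emph{any} algorithm is bounded above by $\mathtt{maxflow}(G,s,d)/f$ jobs per unit time, and $\greedya$ attains exactly this bound: it serves each job at the full max-flow rate and the jobs back to back, so the long-run bit rate it sustains equals the $s$--$d$ max flow. Occupying the many links of the long route costs nothing when no other traffic ever needs those links; the ``waste'' you are trying to exploit only becomes a throughput loss when the links tied up by one request's roundabout route are the direct routes of \emph{other} requests. Giving the long path capacity $\epsi$ does not rescue the construction for the same reason --- the optimum cannot exceed $1+\epsi$ either. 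So no choice of $f$, spacing, or capacities on a single-commodity instance can produce any gap, let alone a factor $|V|/2$.

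Your second, alternative construction (many source--destination pairs, each with a dedicated short link, whose alternate routes all collide) is the right idea and is essentially what the paper does: it uses a ring of $|V|$ unit-capacity links with one request per second between each adjacent pair. There, $\greedya$'s max flow for each request is $2$, split half on the direct edge and half the long way around the ring, so a single request saturates \emph{every} edge and all $|V|$ requests serialize, taking $|V|/2$ seconds total, whereas the optimum routes each request on its direct edge only and finishes all of them in $1$ second in parallel. This lands the arithmetic exactly on $|V|/2$. Your point (i) is fine in this setting (the $s$--$d$ min cut in a cycle between adjacent nodes is $2$, so the max flow must use both arcs of the ring). Regarding your point (ii), note that the paper itself handles the saturation-throughput formalization only informally (a periodic arrival pattern and a per-period work comparison); a Loynes-type stability argument is not required at the level of rigor the paper adopts, though it would be needed for a fully formal statement.
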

\begin{proof}
Consider the ring network shown in Fig. \ref{fig:greed}(a).
Suppose every link is an undirected 1~Gb/s link and that during
each second requests arrive in this order: 1Gb request from node 1
to 2, 1Gb request from node 2 to 3, 1Gb request from node 3 to 4,
$\cdots$, 1Gb request from node $|V|$ to 1.

 The $\greedya$ algorithm will
allocate the maximum flow to each request, meaning it will be
split to two paths, half flowing directly, and half flowing
through the alternate path along the entire ring. Therefore, each
new request will have to wait to the previous one to end,
resulting in a completion time of $|V|/2$ seconds.

On the other hand,  the optimal time is just one second using the
direct link between each pair of nodes. Assuming the above pattern
of requests repeats periodically, then an optimal algorithm can
support at most one request per second between each pair of
neighboring nodes before reaching saturation, while $\greedya$ can
support at most $2/|V|$ request per second, hence proving the
theorem. A similar result can be proved for directed graphs.
\end{proof}

We next show that restricting routing to shortest paths does not
solve the inefficiency problem.

\begin{theorem}
For any given vertex set cardinality $|V|$, there exists a graph
$G(V,E)$ such that the saturation throughput of $\greedyb$ (or any
other algorithm using only shortest path routing) is $|V|-2$ times
smaller than the optimal saturation throughput.
\end{theorem}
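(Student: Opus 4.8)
The plan is to exhibit one fixed graph on which \emph{every} shortest $s$--$d$ path is squeezed through a single narrow link, whereas an off-line algorithm can fan the flow out over $|V|-2$ internally vertex-disjoint detours. Concretely I would take $V=\{s,d,v_1,\dots,v_{|V|-2}\}$ with edges $(s,d)$ and $(s,v_i),(v_i,d)$ for $1\le i\le |V|-2$, every link being an undirected $1$~Gb/s link, and I would feed it a periodic stream of $1$~Gb jobs from $s$ to $d$ arriving at rate $\lambda$, mimicking the request pattern used in the proof of the preceding theorem.

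Step one is the $\greedyb$ analysis. Because $s$ and $d$ are adjacent, the unique minimum-hop path is the single edge $(s,d)$; any route through some $v_i$ has length $2$. Hence the breadth-first pruning in $\greedyb$ deletes every link except $(s,d)$, so all $s$--$d$ traffic is carried on that one $1$~Gb/s link and at most one job is ever in service. Exactly as in the earlier proof, this yields a saturation throughput of $1$~Gb/s, and the same bound holds for \emph{any} algorithm restricted to shortest-path routing, since the shortest-path subgraph is literally that single edge, regardless of how the algorithm splits flow on it.

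Step two is a lower bound on the optimal off-line throughput. The maximum $s$--$d$ flow in $G$ has value $|V|-1$~Gb/s: one unit on the direct edge plus one unit on each of the $|V|-2$ two-hop paths $s\,v_i\,d$ (each bottlenecked at $1$ by its two unit-capacity links, and the paths share no internal vertex). An off-line algorithm, free to disperse a job over all of these paths and to switch paths, therefore keeps the average delay finite for every $\lambda<|V|-1$, so its saturation throughput is $|V|-1$~Gb/s. Dividing, $\greedyb$ (and any shortest-path scheme) is a factor $|V|-1$ below optimal, which in particular establishes the stated bound of $|V|-2$; if an exact match to $|V|-2$ is wanted, attach one of the $v_i$ to $s$ only so that it becomes a pendant vertex contributing to $|V|$ but not to the flow, and a directed analogue follows by the same device used in the ring example.

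The one point I would develop in full rather than wave at is the translation between ``the algorithm devotes flow value $r$ to the $s$--$d$ commodity'' and ``its saturation throughput equals $r$'' --- this is precisely the queue-stability step behind the previous proof: with work arriving at rate $\lambda$ Gb/s and a usable service rate $r$ Gb/s for that commodity, the backlog, and hence the mean delay, stays bounded exactly when $\lambda<r$ and grows without bound otherwise. Everything else in the argument --- the hop-count computation, the max-flow value $|V|-1$, and the fact that pruning to shortest paths annihilates all the detours --- is immediate from the construction, so I do not anticipate any real obstacle beyond stating that stability step carefully.
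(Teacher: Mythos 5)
Your proposal is correct and follows essentially the same construction as the paper: a direct edge from source to destination (which is the unique shortest path and hence the only route available to $\greedyb$) together with $|V|-2$ internally disjoint two-hop detours that an optimal algorithm exploits in parallel. The only cosmetic difference is that your count gives a ratio of $|V|-1$ rather than exactly $|V|-2$, which you already note how to adjust, and your explicit treatment of the queue-stability step is a welcome tightening of an argument the paper leaves implicit.
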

\begin{proof}
Consider the network depicted in Fig. \ref{fig:greed}(b), where
all requests are from node 1 to 2, and only the direct path is
used by the algorithm. In this scenario, an optimal algorithm would
all $|V|-2$ paths between nodes 1 and 2. Hence, the optimal
algorithm can achieve a saturation throughput $|V|-2$ higher than
$\greedyb$.
\end{proof}

\begin{figure}
\epsfxsize=0.15\textwidth \epsfbox{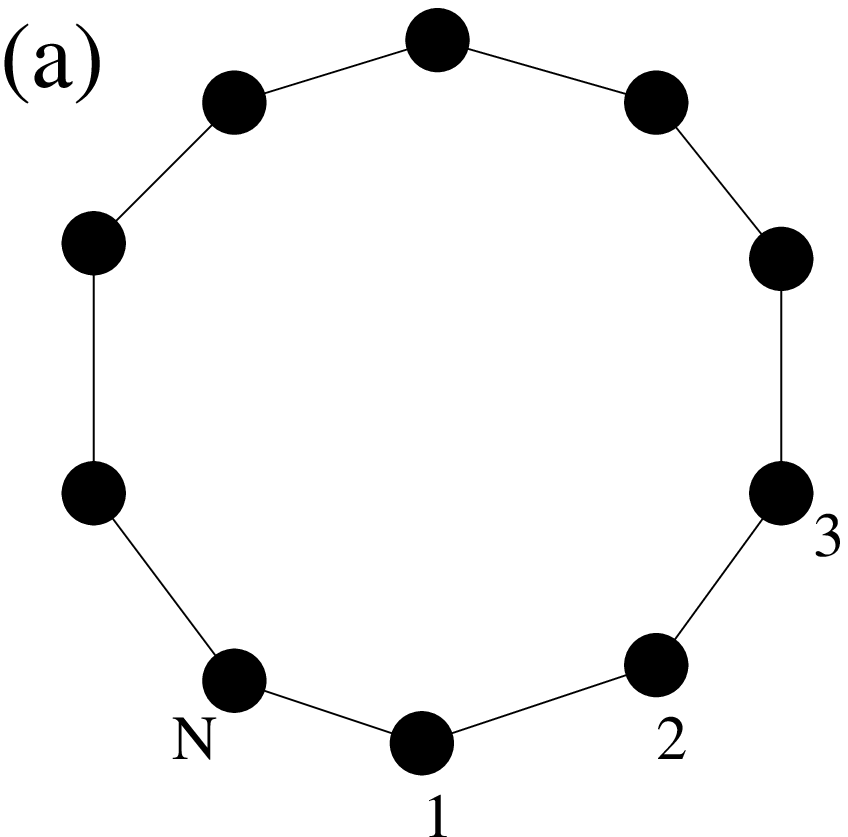}
\epsfxsize=0.3\textwidth \epsfbox{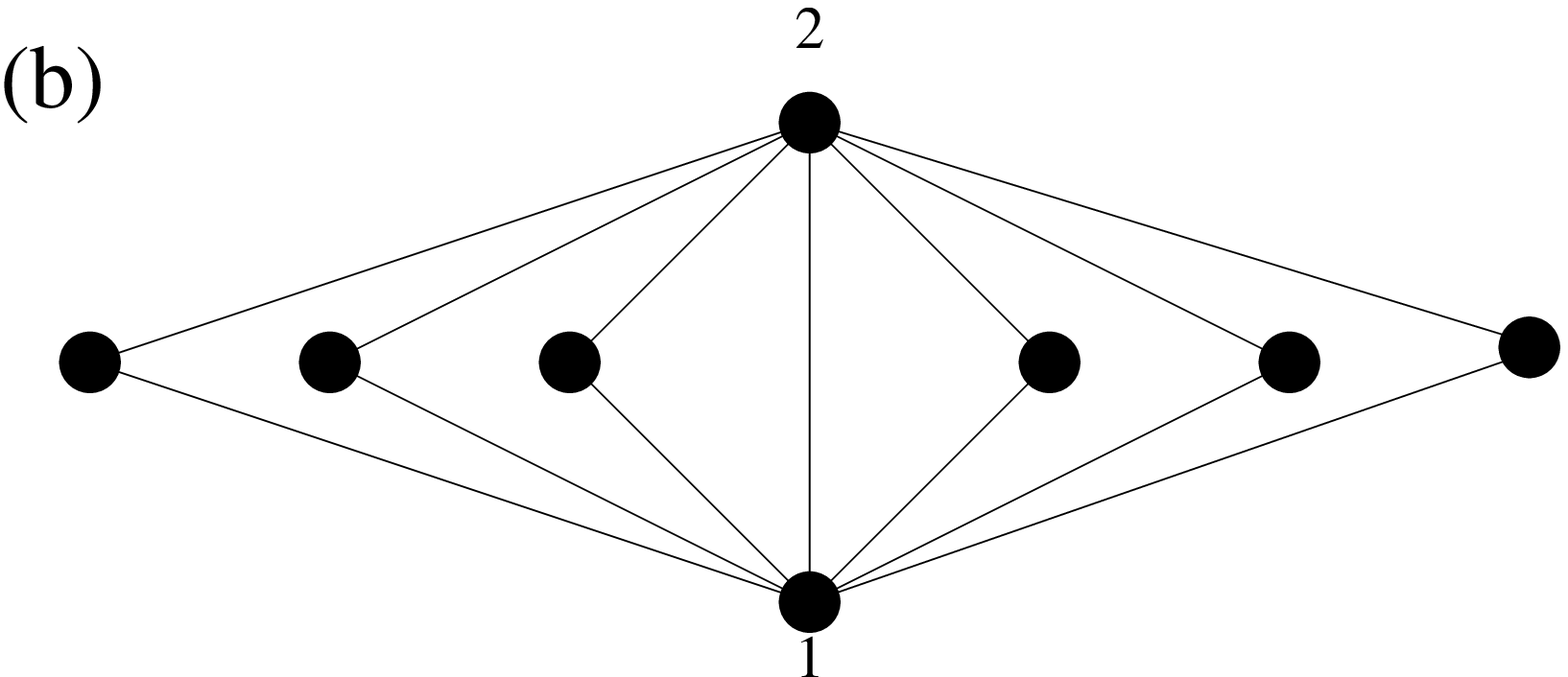} \caption{Examples for
the disadvantages of the greedy approach. \label{fig:greed}}
\end{figure}

\section{Competitive Algorithms}
\label{sec:comp}

In the previous section, we uncovered some of the limitations of
greedy algorithms that immediately allocate resources to each
arriving request. In this section, we present a new family of
on-line, polynomial-time algorithms that rely on the idea of
``batching'' requests. Thus, instead of immediately reserving a path
for each incoming request as in the greedy algorithms, we accumulate
several arrivals in a batch and assign a more efficient set of flow
paths to the whole batch. The proposed algorithms guarantee that the
maximum delay experienced by each request is within a finite
multiplicative factor of the value achieved with an optimal off-line
algorithm. Thus, these algorithms reach the maximum
throughput achievable by any algorithm.

%%%%%%%%%%%%%%%%%%%%%%%%%%
\subsection{Capacity Bounds}
\label{sec:fluid}
%%%%%%%%%%%%%%%%%%%%%%%%%%
As a preparatory step, we present a bound on the maximum
throughput achievable by any algorithm. This bound, based on a
multicommodity flow formulation, will be used to compare the
performance of the proposed on-line competitive algorithms to the
optimal off-line algorithm.
\begin{lemma}
\label{lemma:multicomm}
 If during any time interval $T$, each node $i
\in V$ sends on average $\gamma_{ij}$ bits of information per unit
time to node $j \in V-\{i \}$, then there exists a multi-commodity
flow allocation on graph $G$ with flow values $f_{ij} =
\gamma_{ij}$. \label{first}
\end{lemma}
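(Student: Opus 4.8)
The plan is to exhibit an explicit multicommodity flow that meets the demands $f_{ij} = \gamma_{ij}$ by \emph{averaging} the actual transmission schedule over the time interval $T$. The key observation is that whatever the algorithm does during $[0,T]$, at each instant it routes data along some collection of paths respecting the link capacities; integrating the instantaneous link usage over $T$ and dividing by $T$ yields a static flow that is feasible (capacities are not exceeded on average) and that carries exactly $\gamma_{ij}$ units between each pair $(i,j)$.

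\Proof Fix the time interval $T$ and consider any feasible schedule that delivers, for each ordered pair $(i,j)$ with $j \neq i$, an average of $\gamma_{ij}$ bits per unit time from $i$ to $j$, so that a total of $T\gamma_{ij}$ bits are transferred over the interval. For each link $e \in E$ and each instant $\tau \in [0,T]$, let $x_e(\tau)$ denote the instantaneous rate at which bits are flowing across $e$ under the schedule; since the schedule is feasible we have $x_e(\tau) \le C(e)$ for all $\tau$. Moreover, because data is conserved at every intermediate node at every instant, the vector $(x_e(\tau))_{e\in E}$, together with the instantaneous source/sink rates, satisfies flow conservation at each node other than the current sources and destinations. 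Define the time-averaged link flow
\begin{equation}
\bar{x}_e \;=\; \frac{1}{T}\int_0^T x_e(\tau)\,d\tau .
\end{equation}
Averaging preserves both constraints: $\bar{x}_e \le C(e)$ for every $e$, and flow conservation holds for the averaged flow at every node, with the net supply at node $i$ destined for node $j$ equal to $\frac{1}{T}\cdot T\gamma_{ij} = \gamma_{ij}$. Decomposing $\bar{x}$ into per-commodity components (a standard flow decomposition) yields a multicommodity flow on $G$ with flow value $f_{ij} = \gamma_{ij}$ for every pair $(i,j)$, which is exactly the claim. \QED

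The main subtlety to handle carefully is the bookkeeping of data that is \emph{in transit} at the endpoints of the interval: bits that entered the network before time $0$ but leave after, or that enter before $T$ but are still buffered at intermediate nodes at time $T$. These boundary terms contribute an amount independent of $T$, so after dividing by $T$ their effect on the averaged flow is $O(1/T)$ and vanishes as $T$ grows; since the lemma is stated for an arbitrary interval $T$, one can either appeal to this asymptotic argument, or — cleanly — observe that the lemma is really a statement about rates and assume without loss of generality that the network is empty at the start and end of the interval (any residual in-transit data only makes the realized demand pattern \emph{harder}, not easier, to support). I expect this endpoint accounting to be the only place where more than routine care is needed; the averaging-and-decomposition core is immediate.
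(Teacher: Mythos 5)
Your proof is correct and follows essentially the same route as the paper's: both average the instantaneous per-link rates over the interval and invoke linearity to show that the capacity and flow-conservation constraints are preserved, yielding a feasible multicommodity flow with values $\gamma_{ij}$. The only cosmetic difference is that the paper keeps the per-commodity rates $R^k_{i,j}(t)$ explicit throughout (so no decomposition step is needed), whereas you aggregate first and then decompose; your added care about in-transit data at the interval endpoints is a reasonable refinement of a point the paper glosses over.
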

\begin{proof}
First, we prove that for any given arrangement of sources and
destinations in a network and at any time $t$, the transmission
rates in bits per unit time comply with the properties of
multi-commodity flows. Take the transmission rate between node $i$
and $j$ going through the (directed) edge $k$ at time $t$ to be
$R^k_{i,j}(t)$. The total throughput passing each link can not
exceed the link capacity, $\sum_{i,j}R^k_{i,j}(t)\leq C(k)$.
According to the information conservation property, the total
transmission rate into any node is equal to the out-going rate
except for the source and destination of any of the transmissions.
Now, the average transmissions during any time interval $T$
satisfy the same properties as above because of linearity. Thus,
we obtain a valid multi-commodity flow problem, where the time
average of each flow represents an admissible commodity.
\end{proof}

\begin{corollary}
\label{multi_cor} The average transmission rate between all pairs
in the network for any time interval is a feasible multicommodity
flow.
\end{corollary}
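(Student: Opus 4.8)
The plan is to derive this directly from Lemma~\ref{lemma:multicomm}, of which the corollary is essentially a reformulation. First I would fix an arbitrary time interval $T$ and, for every ordered pair $(i,j)$ with $i \in V$ and $j \in V-\{i\}$, define $\gamma_{ij}$ to be the total number of bits transmitted from $i$ to $j$ during $T$ divided by the length of $T$, i.e., the average transmission rate in bits per unit time. These quantities are exactly the $\gamma_{ij}$ appearing in the hypothesis of Lemma~\ref{lemma:multicomm}.

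Next I would invoke the lemma: it guarantees the existence of a multi-commodity flow allocation on $G$ whose commodity values satisfy $f_{ij}=\gamma_{ij}$. It then remains only to observe that such an allocation is \emph{feasible} in the usual sense, i.e., that it respects the link capacities and the flow-conservation constraints. Both facts are already established inside the proof of Lemma~\ref{lemma:multicomm}: the per-link capacity bound $\sum_{i,j} R^k_{i,j}(t)\le C(k)$ survives time-averaging by linearity, and the information-conservation property (incoming rate equals outgoing rate at every node other than a source or destination of a transmission) likewise survives averaging. Hence the averaged rates define a feasible multicommodity flow, which is the claim.

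There is no genuine obstacle here; the only point requiring a little care is bookkeeping — being explicit that ``average transmission rate'' means total bits divided by the interval length, so that the units agree with those of link capacity, and that ``feasible'' is read as ``satisfies the capacity and conservation constraints.'' I would also remark that, since the argument imposes no restriction on which source-destination pairs are active, the conclusion applies in particular to the aggregate traffic matrix induced by any admissible arrival schedule of connection requests, which is the form in which the corollary will be used in the sequel.
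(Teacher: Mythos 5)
Your proof is correct and takes essentially the same route as the paper, which states the corollary as an immediate consequence of Lemma~\ref{lemma:multicomm} without giving it a separate proof. Instantiating $\gamma_{ij}$ as the time-averaged rates over the interval and appealing to the capacity and conservation arguments already established in the lemma's proof is exactly the intended reading.
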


\subsection{The $\competa$ Competitive Algorithm}

In case no deterministic knowledge on the future requests is
given, one would like to give some bounds on the performance of
the algorithm compared to the performance of a ``perfect''
off-line algorithm (i.e., with full knowledge of the future and
unlimited computational resources). We present here an algorithm,
called $\competa$ (since it batches together all pending
requests), giving bounds on this measure.

The algorithm can be described as follows (we assume that there is
initially no pending request):
\begin{enumerate}
\item For a request $l=\{s,d,f\}$
arriving at time $t$, give an immediate starting time and an ending
time of $t_{c}=t+\mathtt{maxflow}(G,l)$. \label{this0} \item Set $L
\leftarrow \mbox{null}$\label{this1}.
\item \label{this2} While $t<t_c$ and another request $l'=\{s',d',f'\}$ arrives at time $t$:
\begin{itemize}
\item Set $L \leftarrow L \cup l'.$
\item Mark $t_{c}$ as its starting time and add it
to the waiting batch.
\end{itemize}
\item When $t=t_c$, calculate $t'=\mathtt{maxflow}(G,L)$.
\begin{itemize}
\item If $t'=0$ (i.e., there is no pending request)  go back to step~\ref{this0}.
\item Else $t_c = t_c + t'$ and go back to step~\ref{this1}.
\end{itemize}
\end{enumerate}

We note that upon the arrival of a request, the $\competa$ algorithm
returns only the starting time of the connection. The allocated
paths and completion time are computed only when the connection
starts (in the next section, we present another competitive
algorithm with a slightly looser competitive ratio but which returns
the completion time at the arrival time of a request).

We next compare the performance of an optimal off-line algorithm
in the network with that of the $\competa$ algorithm in an
augmented network. The augmented network is similar to the
original network, other than that it has a capacity of
$(1+\epsi)C(e)$ at any link, $e$, that has capacity $C(e)$ in the
original network. This implies that the performance of the
competitive algorithm is comparable if one allows a factor $\epsi$
extra capacity in the links, or, alternatively, one may say that
the performance of the algorithm is comparable to the optimal
off-line algorithm in some lower capacity network, allowing the
maximum rate of only $(1+\epsi)^{-1}C(e)$ for each link. The
$\competa$ algorithm satisfies the following theoretical property
on the maximum waiting time:
\begin{theorem}
\label{thm:compet}
In a network with augmented resources where every edge has
bandwidth $(1+\epsi)$ times the original bandwidth
(for all $\epsi>0$), the maximum waiting time from request
arrival to the end of the transmission using $\competa$,
for requests arriving up to any time $t^*$, is no
more than $2/\epsi$ times the maximum waiting time for the
optimal algorithm in the original network.
\end{theorem}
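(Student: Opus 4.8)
The plan is to compare the batch structure of $\competa$ running in the augmented network against the optimal off-line schedule in the original network, and to show that each batch is short enough that delay does not accumulate beyond the claimed factor. Let the batches produced by $\competa$ be $B_1, B_2, \ldots$, with batch $B_k$ occupying the time interval $[\tau_k, \tau_{k+1}]$, where $\tau_{k+1} - \tau_k = \mathtt{maxflow}(G_{1+\epsi}, B_k)$ and $G_{1+\epsi}$ denotes the augmented graph. A request that arrives during batch $B_{k-1}$ waits at most until $\tau_k$ (the end of the batch in which it arrived, which is the start of its own batch) and then is served within $B_k$, so its total delay is at most $(\tau_{k+1} - \tau_{k-1})$, i.e., the length of two consecutive batches. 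The crux is therefore to bound the length of any single batch by $1/\epsi$ times the maximum optimal delay $D^{*}$.

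First I would fix an arbitrary batch $B_k$ and let $t^{*}$ be the arrival time of the last request placed in $B_k$ (equivalently, all of $B_k$'s requests arrived in $[\tau_{k-1}, \tau_k]$). By Lemma~\ref{lemma:multicomm} (via Corollary~\ref{multi_cor}), the traffic that the optimal algorithm carries in the original network over the interval $[\tau_{k-1}, \tau_k + D^{*}]$ induces a feasible multicommodity flow in $G$ whose per-commodity values are the average rates; since every request in $B_k$ arrived by time $\tau_k$ and the optimal algorithm finishes each within $D^{*}$, all of $B_k$'s file demand is transported by the optimal flow within a window of length at most $(\tau_k - \tau_{k-1}) + D^{*}$. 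That window length is itself at most (length of $B_{k-1}$) $+ D^{*}$, so I would set up a recursion: writing $\ell_k$ for the length of batch $B_k$, the feasibility of carrying $B_k$'s demand in the \emph{original} network in time $\ell_{k-1} + D^{*}$ means that in the \emph{augmented} network the same demand is feasible in time $(\ell_{k-1} + D^{*})/(1+\epsi)$, hence $\ell_k \le (\ell_{k-1} + D^{*})/(1+\epsi)$.

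Unrolling this recursion from $\ell_1 \le D^{*}/(1+\epsi)$ (the first batch is a single request, whose augmented-network completion time is at most its optimal completion time over $1+\epsi$, and a single request's optimal completion time is at most $D^{*}$) gives $\ell_k \le D^{*}\sum_{j\ge 1}(1+\epsi)^{-j} = D^{*}/\epsi$ for every $k$. Then the total delay of any request, bounded above by $\ell_{k-1} + \ell_k \le 2D^{*}/\epsi$, yields the theorem. The main obstacle I anticipate is making the ``all of $B_k$'s demand is carried by the optimal flow in a window of length $\ell_{k-1}+D^{*}$'' step fully rigorous: one must argue carefully that requests entering $\competa$'s batch $B_k$ are exactly those arriving in $[\tau_{k-1},\tau_k]$, that the optimal algorithm's handling of precisely this set of requests (ignoring its other traffic only helps feasibility, by monotonicity of multicommodity feasibility under demand reduction) fits in the stated window, and that $\mathtt{maxflow}$ in the augmented network scales exactly by $(1+\epsi)^{-1}$ relative to a feasible time in the original network — this last point is just linearity of the LP, but it is the hinge of the whole argument and should be stated explicitly.
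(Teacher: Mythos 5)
Your proof is correct, but it runs in the opposite direction from the paper's and is structured quite differently. The paper's proof is a one-shot pigeonhole argument: it picks the \emph{longest} batch, of length $T$, notes that all of its requests arrived within a window of length at most $T$ (the previous batch), and uses Corollary~\ref{multi_cor} to argue that any schedule for that same demand needs wall-clock time at least $T$ in the augmented network, hence $(1+\epsi)T$ in the original one, so some request must suffer delay at least $\epsi T$ under the optimal algorithm; combined with the $2T$ upper bound on $\competa$'s delay this gives $2/\epsi$ directly, with no recursion. You instead \emph{upper-bound} every batch length in terms of the optimal delay $D^{*}$ via the contraction $\ell_k \le (\ell_{k-1}+D^{*})/(1+\epsi)$ and unroll the geometric series to get the uniform bound $\ell_k \le D^{*}/\epsi$. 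Both arguments hinge on exactly the same two ingredients --- Lemma~\ref{lemma:multicomm}/Corollary~\ref{multi_cor} turning the optimal schedule's time-averages into a feasible multicommodity flow, and the exact $(1+\epsi)^{-1}$ scaling of the concurrent-flow LP under capacity augmentation --- so the mathematical content is the same, but your version is essentially the contrapositive of the paper's. What your route buys is a stronger intermediate statement (all batch lengths, not just a bound on the worst delay ratio, are uniformly controlled by $D^{*}/\epsi$, with geometric convergence from the single-request base case after every idle period), and it makes explicit two steps the paper leaves implicit: monotonicity of multicommodity feasibility under dropping the optimal algorithm's other traffic, and the LP-scaling step. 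What the paper's route buys is brevity and the avoidance of any induction or restart bookkeeping. The only caveat, shared equally by both proofs, is the mild imprecision about requests arriving after $t^{*}$ that land in a batch containing a pre-$t^{*}$ request; this does not affect correctness under the natural reading that $D^{*}$ is the optimal algorithm's overall maximum delay.
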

\begin{proof}
Consider the augmented resource network. Take the maximum length
batch that accommodates requests arriving before time $t^*$, say
$i$, and mark its length by $T$. Since the batch before this one was
of length at most $T$ (if a request arrives when no batch is
executing, the batch is considered to be of size 0), and all
requests in batch $i$ were received during the execution of previous
batch $i-1$, then the total waiting time of each of these requests
was at most $2T$. By Corollary~\ref{multi_cor}, the total time for
handling all requests received during the execution of batch $i-1$
must have been at least $T$ in the augmented network, or
$(1+\epsi)T$ in the original network. Since all of these requests
arrived during a time of at most $T$, one of them must have been
continuing at least a time of $\epsi T$ after the last request under
any algorithm. Therefore the ratio between the maximum waiting time
is at most $2/\epsi$.
\end{proof}

\begin{corollary} The saturation throughput of $\competa$ is optimal because for
any arbitrarily small $\epsilon
> 0$, the delay of a request is guaranteed to be at most a
finite multiplicative factor larger than the maximum delay of the
optimal algorithm in the reduced resources network.
\end{corollary}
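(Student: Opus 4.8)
The plan is to combine the competitive bound of Theorem~\ref{thm:compet} with a linear scaling property of the capacity region. Let $\lambda^\ast$ denote the optimal saturation throughput in the original network $G$, i.e., the supremum of arrival rates for which \emph{some} algorithm keeps the average delay finite. The inequality ``saturation throughput of $\competa$ $\le \lambda^\ast$'' is immediate from the definition of $\lambda^\ast$; moreover Lemma~\ref{lemma:multicomm} (through Corollary~\ref{multi_cor}) identifies $\lambda^\ast$ concretely, since any stable algorithm produces long-run average rates $\gamma_{ij}$ that must form a feasible multicommodity flow in $G$. So the real content is the reverse inequality: $\competa$ must be shown to be stable (finite average delay) at every rate $\lambda<\lambda^\ast$.

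First I would record the reinterpretation of Theorem~\ref{thm:compet} already noted in the text. Applying the theorem to the network $G_\epsi$ whose links have capacity $(1+\epsi)^{-1}C(e)$ — whose $(1+\epsi)$-augmentation is exactly $G$ — yields that, for \emph{any} request sequence, the delay of every request served by $\competa$ run on $G$ is at most $2/\epsi$ times the maximum delay incurred by the optimal off-line algorithm run on $G_\epsi$. Next I would use the fact that the saturation throughput scales linearly with link capacities: multiplying every $C(e)$ by $(1+\epsi)^{-1}$ multiplies every transmission time, and hence every achievable delay, by exactly $1+\epsi$ (the feasible region of the multicommodity flow program in Lemma~\ref{lemma:multicomm} scales linearly with the capacity vector), so the optimal saturation throughput of $G_\epsi$ equals $(1+\epsi)^{-1}\lambda^\ast$.

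Now fix $\lambda<\lambda^\ast$ and choose $\epsi>0$ small enough that $(1+\epsi)\lambda<\lambda^\ast$, i.e. $\lambda<(1+\epsi)^{-1}\lambda^\ast$. Then $\lambda$ lies strictly below the optimal saturation throughput of $G_\epsi$, so the optimal off-line algorithm on $G_\epsi$ keeps the maximum delay of requests bounded by some finite value $M$ that does not grow with the request count. Plugging $M$ into the reinterpreted Theorem~\ref{thm:compet} shows that every request served by $\competa$ on $G$ has delay at most $(2/\epsi)M<\infty$, uniformly over all requests; hence the average delay under $\competa$ is finite and $\competa$ is stable at rate $\lambda$. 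Since this holds for every $\lambda<\lambda^\ast$, the saturation throughput of $\competa$ is at least $\lambda^\ast$, and with the trivial upper bound it is exactly $\lambda^\ast$ — which is the assertion of the corollary.

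The main obstacle is the step that turns the per-request, \emph{maximum}-delay competitive guarantee of Theorem~\ref{thm:compet} into a statement about the \emph{average}-delay notion of saturation throughput: one has to argue that strictly below the capacity of $G_\epsi$ the optimal off-line algorithm has a bounded maximum delay (not merely a finite time-average), since a quantity that grows with the number of requests would not survive multiplication by the constant $2/\epsi$ once averaged. This is exactly where the strict $\epsi$-margin produced by $(1+\epsi)\lambda<\lambda^\ast$ is essential, together with whatever arrival-regularity is implicit in the model, and it is also why the corollary is phrased in terms of throughput rather than in terms of the average delay itself. A secondary, routine point is the linearity of the capacity region used to scale $\lambda^\ast$ by $(1+\epsi)^{-1}$, which follows directly from the multicommodity flow formulation of Lemma~\ref{lemma:multicomm}.
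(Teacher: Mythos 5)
Your proposal is correct and follows essentially the same route as the paper, which offers no separate proof and simply asserts the corollary as an immediate consequence of Theorem~\ref{thm:compet} read against the reduced-capacity network $G_\epsi$ with links $(1+\epsi)^{-1}C(e)$. Your write-up fleshes out exactly that intended argument (linear scaling of the capacity region, choosing $\epsi$ so that $(1+\epsi)\lambda<\lambda^\ast$, then invoking the $2/\epsi$ bound), and your explicit handling of the maximum-versus-average delay subtlety is a point the paper glosses over entirely.
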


We next show that the computational complexity of $\competa$ is
practical.

\begin{theorem}
The computational complexity of algorithm $\competa$ per
request is at most polynomial in the size of the network.
\end{theorem}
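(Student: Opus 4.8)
The plan is to show that each incoming request triggers only a bounded number of calls to the flow subroutines, and that each such call runs in polynomial time, so that the total work attributable to a request is polynomial.

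First I would walk through the pseudo-code counting flow computations. An arriving request $l=\{s,d,f\}$ either finds the algorithm idle, in which case step~\ref{this0} performs a single call $\mathtt{maxflow}(G,l)$ and nothing more, or it arrives while a batch is running, in which case the loop of step~\ref{this2} does only $O(1)$ bookkeeping (record its starting time, append $l$ to the list $L$) and performs no flow computation at all. Separately, each time a batch finishes (the loop of step~\ref{this2} exiting because $t=t_c$) the final step performs exactly one call $\mathtt{maxflow}(G,L)$, independently of how many requests $L$ contains. Since every non-empty batch contains at least one request, at most one empty batch occurs per idle-to-busy cycle, and the number of such cycles is at most the number $n$ of requests received, the total number of flow computations of either kind over the whole run is $O(n)$. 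Hence, amortized, each request is charged $O(1)$ flow computations plus $O(1)$ additional bookkeeping.

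Next I would invoke the results cited earlier, \cite{mcfp,multi-commodity}: the feasibility test $\mathtt{multicomm}(G,L,T)$ is solvable in time polynomial in the encoding length of $G$, $L$ and $T$, and $\mathtt{maxflow}(G,L)$ is obtained from it by binary search on $T$, which adds only a factor logarithmic in the file sizes and the required precision. Consequently the single-commodity call $\mathtt{maxflow}(G,l)$ costs $\mathrm{poly}(|V|,|E|)$, and a batch call $\mathtt{maxflow}(G,L)$ costs $\mathrm{poly}(|V|,|E|,|L|)$; since one batch call resolves the routing of the entire batch of $|L|$ requests at once, dividing its cost among those requests and combining with the $O(1)$ bound on the number of calls per request yields a per-request cost polynomial in the size of the network (and in the batch size, which never exceeds the number of outstanding requests).

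The step I expect to be the main obstacle is exactly this last accounting: the running time of the batch call $\mathtt{maxflow}(G,L)$ a priori grows with $|L|$, which is not bounded by the network size, so a careless bound would charge a super-constant amount of flow-computation work per request. The resolution is the amortization above — one such call serves $|L|$ requests simultaneously and the number of batches is linear in $n$ — together with making explicit that the binary-search reduction from $\mathtt{maxflow}$ to $\mathtt{multicomm}$ contributes only logarithmic overhead, so that every invocation is genuinely polynomial in its input size.
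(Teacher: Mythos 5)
Your proposal is correct and follows essentially the same argument as the paper: at most one batch (and hence one polynomial-time maximum concurrent flow computation) per request, so the per-request cost is polynomial. Your additional care about amortizing the cost of a batch call over the $|L|$ requests it serves, and about the dependence of the flow computation on $|L|$, is a refinement the paper's two-sentence proof glosses over, but it does not change the underlying approach.
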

\begin{proof}
For each batch, the $\competa$ algorithm solves a maximum concurrent
flow problem that can be computed in polynomial time~\cite{mcfp}.
 Noting that the number of
batches cannot exceed the number of requests (since each batch
contains at least one request), the theorem is proven.
\end{proof}

\subsection{The $\competb$ Competitive Algorithm}
The following algorithm, called $\competb$, operates in a similar
setup to $\competa$.  However, it gives a guarantee on the finishing
time of a job when it is submitted.

The algorithm is based on creating increasing size batches in case
of high loads. When the rate of request arrival is
high, each batch created will be approximately twice as long as the
previous one, and thus larger batches, giving a good approximation
of the total flow will be created. When the load is low only a single
batch of pending requests exists at most times, and its size will
remain approximately constant or even decrease for decreasing load.

The algorithm maintains a list of times, $t_i$, $i=1,\ldots,n$,
where for every interval, $[t_i,t_{i+1}]$, a batch of jobs is
assigned. When a new request between a source $s$ and a destination
$d$ arrives at time $t$, then an attempt is first made to add it to
one of the existing intervals by using a multicommodity flow
calculation. If the attempt fails, a new time, then a new batch is
created and  $t_{n+1}=\max(t_n+(t_n-t),t_n+M)$, where
$M=f/\mathtt{maxflow}(G,s,d)$ is the minimum time for the job
completion, is added to the list, and the job is assigned to the
interval $[t_n,t_{n+1}]$.

We next provide a detailed description of how the algorithm handles
a new request. We use the tuple $l=\{s,f,d\}$ to denote this job and
the list $L_i$ to denote the set of jobs already assigned to slot
$i$.

\begin{enumerate}
\item Initialization: set $i \leftarrow 1$.
\item While $i \leq n$
\begin{itemize}
\item If $\mathtt{multicomm}(G,L_i\cup l,t_{i+1}-t_{i}) =
\mathtt{true}$ then return $i$ and exit (check if request can be
accommodated during slot $i$).
\item $i \leftarrow i+1$.
\end{itemize}
\item Set $n \leftarrow n+1$ (create a new slot).
\item Set $M=\mathtt{maxflow}(G,l)$.
\item If $M>t_{n-1}-t$\\
then $t_{n+1}\leftarrow t_{n}+M$;\\
else $t_{n+1}\leftarrow t_{n}+(t_{n}-t)$.
\item Return $n$.
\end{enumerate}
Fig.~\ref{fig:compet2} illustrates runs of the $\competa$ and
$\competb$ algorithms for the same set of requests.

The advantage of the $\competb$ algorithm over $\competa$ is the
assignment of a completion time to a request upon its arrival.
Furthermore, the proof of the algorithm requires only an attempt to
assign a new job to the last interval, rather than to all of them.
Therefore, it is also possible to give the details of the selected
paths to the source as soon as a new batch is created.

We introduce the following technical Lemma which will be used to
prove the main theorem.
\begin{lemma}
\label{lem:interval}
(a) For any $i$, $t_i-t_{i-1}\geq(t_i-t)/2$.
(b) For any $i$, $t_{i+1}-t_i\leq \max(2(t_i-t_{i-1}),M)$, where $M$ is the
minimum completion time for the largest job in the interval $[t_{i+1},t_i]$.
\end{lemma}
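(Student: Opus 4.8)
The plan is to read both inequalities off the rule the algorithm uses to set a new batch boundary, together with the fact that batches are created in increasing order of their index. Write $t^{(j)}$ for the arrival time of the request that triggers the creation of slot~$j$, and recall that this request, call it $l^{(j)}$, is itself assigned to slot~$j$. The construction (steps~3--5, equivalently the informal formula for $t_{n+1}$) fixes the right endpoint of the newly created slot~$j$ as $t_{j+1} = t_j + \max\!\big(t_j - t^{(j)},\, M^{(j)}\big)$, where $t_j$ is the former last boundary and $M^{(j)} = \mathtt{maxflow}(G,l^{(j)}) > 0$ is the minimum completion time of $l^{(j)}$. I will use two immediate consequences: (i)~$t_{j+1} - t_j \ge t_j - t^{(j)}$; and (ii)~since slot~$j$ can only be created once slot~$j-1$ already exists, $t^{(j)} \ge t^{(j-1)}$.

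For part~(a): applying (i) to slot~$i-1$ gives $t_i - t_{i-1} \ge t_{i-1} - t^{(i-1)}$; adding $t_i - t_{i-1}$ to both sides yields $2(t_i - t_{i-1}) \ge t_i - t^{(i-1)}$, i.e.\ $t_i - t_{i-1} \ge (t_i - t^{(i-1)})/2$. For any reference time $t \ge t^{(i-1)}$ we have $t_i - t \le t_i - t^{(i-1)}$, hence $t_i - t_{i-1} \ge (t_i - t)/2$ (and the claim is trivial when $t > t_i$, since slot lengths are positive). In the intended application $t$ is the arrival time of a request processed while slot~$i-1$ already exists, so $t \ge t^{(i-1)}$ holds automatically.

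For part~(b): applying the construction rule to slot~$i$ gives $t_{i+1} - t_i = \max(t_i - t^{(i)},\, M^{(i)})$. If the maximum equals $M^{(i)}$, then since $l^{(i)}$ is a job in slot~$i$ and $M$ is (by definition) the minimum completion time of the largest job in slot~$i$, we have $M \ge M^{(i)}$; hence $t_{i+1} - t_i \le M \le \max(2(t_i - t_{i-1}),\, M)$. Otherwise $t_{i+1} - t_i = t_i - t^{(i)}$, and it is enough to bound this by $2(t_i - t_{i-1})$: invoke part~(a) with reference time $t = t^{(i)}$, which is legitimate because $t^{(i)} \ge t^{(i-1)}$ by (ii). This gives $t_i - t_{i-1} \ge (t_i - t^{(i)})/2$, i.e.\ $t_i - t^{(i)} \le 2(t_i - t_{i-1})$, finishing~(b).

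The only genuine subtlety---and the point I would check most carefully---is the bookkeeping of which reference time may legitimately be substituted into part~(a): the inequality there weakens as $t$ grows toward $t_i$, so one must ensure that the substituted time is never earlier than the creation time of slot~$i-1$. Consequence (ii), the monotonicity of batch creation times, is exactly what guarantees this, both in the stand-alone statement of (a) and in the appeal to (a) inside the proof of (b). Everything else is a one-line rearrangement of the $\max$ in the construction rule.
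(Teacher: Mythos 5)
Your proof is correct and follows essentially the same route as the paper's: both read the two inequalities directly off the batch-creation rule $t_{j+1}-t_j=\max(t_j-t^{(j)},M^{(j)})$, derive (a) by adding $t_i-t_{i-1}\geq t_{i-1}-t^{(i-1)}$ to itself, and obtain (b) by substituting (a) into the $\max$. Your version merely makes explicit the bookkeeping (monotonicity of creation times and which reference time $t$ is legitimate) that the paper's terser proof leaves implicit.
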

\begin{proof}
(a) When the time slot $[t_{i-1},t_i]$ is added, its length is the maximum
of $M$ and $[t,t_{i-1}]$. As $t$ increases $t_i-t_{i-1}\geq t_{i-1}-t$, and
therefore $2(t_i-t_{i-1})\geq (t_i-t_{i-1})+ (t_{i-1}-t)=t_i-t$.

(b) From (a) follows that at the time $t$ when $[t_{i+1},t_i]$ is added
$t_i-t_{i-1}\geq(t_i-t)/2$. Since $t_{i+1}-t_i=\max(t_i-t,M)$ the Lemma
follows.
\end{proof}

\begin{theorem}
For every $\epsi>0$ and every time $t^*$,
the maximum waiting time from the request time
to the end of a job for any request arriving up to time
$t^*$ and
 for a network with augmented resources,
is no more than $4/\epsi$ times the maximum waiting time
for the optimal algorithm in the original network.
\end{theorem}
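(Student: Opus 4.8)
The plan is to mirror the structure of the proof of Theorem~\ref{thm:compet}, but to account for the fact that in $\competb$ batches are not formed by simply collecting all pending requests; instead a request may be inserted into an already-scheduled interval, and when a new interval is created its length is roughly the maximum of the previous interval length and the minimum completion time $M$ of the triggering job. So the first step is to fix an arbitrary request arriving before $t^*$ and track which interval $[t_i,t_{i+1}]$ it is assigned to. Its waiting time is bounded by $(t_{i+1}-t)$ where $t$ is its arrival time, and since $t\ge t_{i-1}$ is false in general, I instead bound $(t_{i+1}-t) \le (t_{i+1}-t_i) + (t_i - t)$. Using Lemma~\ref{lem:interval}(a), $(t_i - t) \le 2(t_i - t_{i-1})$, and using part~(b), $(t_{i+1}-t_i)\le \max(2(t_i-t_{i-1}),M)$. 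Combining these gives a bound on the total waiting time of the form $c\cdot\max(t_i - t_{i-1}, M)$ for a small constant $c$ (one should get $c$ on the order of $3$ or $4$ after being careful).

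Next I would lower-bound the maximum waiting time of the optimal off-line algorithm in the original network. There are two regimes. If the governing quantity is $M$, the minimum completion time of the largest job in interval $i$ computed on the $(1+\epsi)$-augmented graph, then in the \emph{original} graph that same job needs time at least $(1+\epsi)M$ to complete even in isolation, so the optimal algorithm's maximum waiting time is at least $(1+\epsi)M \ge M$; comparing with the $c\cdot M$ upper bound on $\competb$'s waiting time already gives a constant ratio. If instead the governing quantity is $t_i - t_{i-1}$, the length of the previous interval: the jobs scheduled in slot $i-1$ either were collected during a window of length at most $t_{i-1} - (\text{arrival of earliest such job})$, or triggered the creation of slot $i-1$. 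The key observation, exactly as in Theorem~\ref{thm:compet}, is that by Corollary~\ref{multi_cor} the aggregate flow assigned to slot $i-1$ forms a feasible multicommodity flow realized in time $t_i - t_{i-1}$ on the augmented network, hence requires time at least $(1+\epsi)(t_i - t_{i-1})$ on the original network; since all those requests arrived within a window of length at most $t_i - t_{i-1}$ (using Lemma~\ref{lem:interval}(a) to relate the collection window to the interval length), at least one of them is still in service a time $\epsi (t_i - t_{i-1})$ after the last arrival under \emph{any} algorithm, so the optimal maximum waiting time is at least $\epsi(t_i - t_{i-1})$.

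Putting the two regimes together: $\competb$'s maximum waiting time is at most $c\cdot\max(t_i - t_{i-1}, M)$, while the optimal algorithm's is at least $\min(\epsi(t_i-t_{i-1}),\, M)$ up to the $(1+\epsi)$ factor, which is at least $\epsi$ times $\max(t_i-t_{i-1},M)$ when $\epsi\le 1$ (and even better otherwise). Choosing the constants carefully — the factor $2$ from Lemma~\ref{lem:interval}(a), another factor $2$ from part~(b), plus the bookkeeping of the extra $(t_i-t)$ term — should land the overall ratio at $4/\epsi$. I would state the chain of inequalities explicitly and then conclude.

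The main obstacle I anticipate is the careful case analysis tying the interval-length recursion of Lemma~\ref{lem:interval} to the arrival window of the requests in a batch: in $\competb$, unlike $\competa$, a request can be slipped into an old interval long after that interval was created, so ``all requests in batch $i$ arrived within a window of length $T$'' is not literally true and must be replaced by a more careful statement — essentially, that any request assigned to slot $i$ arrived no earlier than the moment slot $i$ would have been ``full'' under the greedy insertion, which is what Lemma~\ref{lem:interval}(a) is designed to control. Getting the constant to come out as exactly $4/\epsi$ rather than something slightly larger is the delicate part; the conceptual content is identical to Theorem~\ref{thm:compet}.
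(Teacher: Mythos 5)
Your plan follows the paper's proof essentially verbatim: the same two-case split (new interval length governed by $M$ versus by $t_n-t$), the same use of Lemma~\ref{lem:interval} to bound the waiting time by $2M$ or $4(t_n-t_{n-1})$, and the same multicommodity-flow lower bound (via Corollary~\ref{multi_cor} and the argument of Theorem~\ref{thm:compet}) of $\epsi(t_n-t_{n-1})$ on the optimal algorithm's maximum wait, yielding the ratio $4/\epsi$. The arrival-window subtlety you flag for jobs inserted into already-existing slots is real, but the paper's own proof is equally terse on that point (it only analyzes the request that triggers a new batch), so your proposal matches the paper's argument in both substance and level of detail.
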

\begin{proof}
When creating a new batch, at the interval $[t_n,t_{n+1}]$ there are
two possibilities:
\begin{enumerate}
\item
$t_{n+1}=t_n+M$, where $M$ is the minimum time for the new job
completion. In this case, the minimum time for the completion of the
job in the original network would have been $(1+\epsi)M$. By
Lemma~\ref{lem:interval}(a) the total waiting time for the job here is
at most $2M$, and therefore the ratio of waiting times is
$2/(1+\epsi)<4/\epsi$.
\item
$t_{n+1}=t_n+(t_n-t)$,
in this case, there exists a request among the set of tasks in batch
$n$ that according to theorem~\ref{thm:compet} waits for a time of
at least $(t_n - t_{n-1})$. Therefore, the maximum waiting time in
the original network is at least $\epsi (t_n - t_{n-1})$.
 Since $t_n-t_{n-1}\geq(t_n-t)/2$, and
$t_{n+1}=t_n+t_n-t$, it follows that $t_{n+1}-t\leq
4(t_n-t_{n-1})$. Therefore, the ratio is at most $4/\epsi$.
\end{enumerate}
\end{proof}

\begin{figure}
\epsfxsize=0.4\textwidth \epsfbox{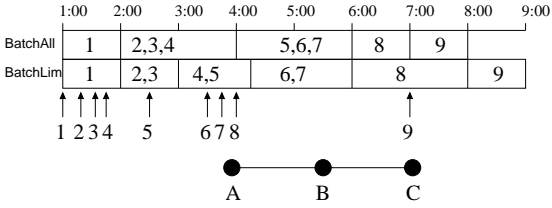} \caption{Illustration
of the batching process by the algorithms $\competa$ and $\competb$.
The odd numbered jobs request one hour of maximum bandwidth traffic
between nodes $A$ and $B$ and the even numbered jobs request one
hour maximum bandwidth traffic between $B$ and $C$. In $\competa$, a
new batch is created for all requests arriving during the running of
a previous batch. In $\competb$, for each new request an attempt is
made to add it to one of the existing windows, and if it fails, a
new window is appended at the end.
\label{fig:compet2} }
\end{figure}

\section{Bounding Path Dispersion}
\label{sec:dispersion}

\subsection{Approximating flows by single circuits}

The algorithms presented in the previous sections (both greedy and
competitive), do not limit the path dispersion, i.e., the number of
number of paths simultaneously used by a connection. In practice, it
is desirable to minimize the number of such paths due to the cost of
setting up many paths and the need to split a into many low capacity
channels. The following suggests a method of achieving this goal.

\begin{lemma}
\label{lem:path} In every flow of size $F$ between two nodes on a
directed graph $G(V,E)$ there exists a path of capacity at least
$F/|E|$ between these nodes.
\end{lemma}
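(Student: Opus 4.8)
The plan is to use a flow decomposition argument. Any flow of value $F$ between a source $s$ and a sink $d$ in a directed graph can be decomposed into at most $|E|$ path flows (plus possibly some cycle flows, which carry no net value and can be discarded). This is the standard flow decomposition theorem: repeatedly find a path carrying positive flow from $s$ to $d$, subtract the bottleneck amount along it, and observe that each such subtraction saturates (zeroes out) at least one edge, so the process terminates after at most $|E|$ iterations. The cycles that may appear during this process carry zero net $s$--$d$ flow and contribute nothing to $F$, so we may ignore them.

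Given this decomposition into path flows $P_1, \dots, P_m$ with $m \le |E|$ and flow values $\phi_1, \dots, \phi_m \ge 0$ summing to $F$, a simple averaging (pigeonhole) argument finishes the proof: if every $\phi_j$ were strictly less than $F/|E|$, then $\sum_{j=1}^m \phi_j < m \cdot F/|E| \le |E| \cdot F/|E| = F$, a contradiction. Hence at least one path $P_j$ in the decomposition carries flow value $\phi_j \ge F/|E|$, and since a path that carries $\phi_j$ units of flow has (residual) capacity at least $\phi_j$ along every one of its edges, this $P_j$ is a path of capacity at least $F/|E|$ between $s$ and $d$.

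I expect the only subtle point to be the precise bookkeeping in the flow decomposition step — specifically, making sure the bound on the number of paths is $|E|$ rather than something larger, and handling the cycle flows cleanly. The key observation is that each extracted path or cycle reduces the support of the flow (the set of edges carrying positive flow) by at least one edge, since we subtract the minimum edge-value along it; thus the total number of extracted path-and-cycle flows is at most $|E|$, and a fortiori the number of path flows is at most $|E|$. Everything else is routine. Note the statement is tight, which the authors presumably address next: on a graph consisting of $|E|$ parallel edge-disjoint $s$--$d$ paths each of capacity $F/|E|$, every single path carries exactly $F/|E|$.
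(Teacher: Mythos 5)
Your proof is correct, but it takes a genuinely different route from the paper's. The paper argues by thresholding: delete every edge carrying (strictly) less than $F/|E|$ of the flow; since the deleted edges carry total flow less than $F/|E|\times|E|=F$, they cannot form an $s$--$d$ cut (by Max-Flow--Min-Cut any cut must carry at least $F$), so a path survives the deletion, and every surviving edge carries at least $F/|E|$. You instead invoke the flow decomposition theorem to split the flow into at most $|E|$ path flows (discarding cycles) and apply pigeonhole to conclude that some path carries at least $F/|E|$; your bookkeeping (each extraction zeroes at least one edge of the support, hence at most $|E|$ extractions) is the standard and correct way to get the $|E|$ bound. The paper's argument is leaner in the sense that it needs only one black box (Max-Flow--Min-Cut) and directly yields the simple algorithm used later in Section~\ref{sec:dispersion-alg} (threshold the edges, then search for any remaining path). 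Your decomposition argument is slightly heavier but buys more: it exhibits \emph{all} of the at most $|E|$ paths at once, and taking the $\lceil\alpha|E|\rceil$ largest of them immediately routes at least an $\alpha$ fraction of $F$, which for small $\alpha$ is a better guarantee than the $(1-e^{-\alpha})F$ obtained in Theorem~\ref{thm:approx} by iterating the lemma. Your closing remark on tightness (parallel edge-disjoint paths of equal capacity) matches the spirit of the paper's Fig.~\ref{fig:net_cut} example.
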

\begin{proof}
Remove all edges of capacity (strictly) less than $F/|E|$
from the graph. The total capacity of these edges is smaller than
$F/|E|\times|E|=F$.  By the Max-Flow--Min-Cut theorem, the maximum
flow equals the minimum cut in the network. Therefore, since the
total flow is $F$ there must remain at least one path between the
nodes after the removal. All edges in this path have capacity of at
least $F/|E|$. Therefore, the path capacity is at least $F/|E|$.
\end{proof}

Lemma~\ref{lem:path} is tight in order in both $|E|$ and $|V|$ as
can be seen for the network in Fig.~\ref{fig:net_cut}. In this
network each path amounts for only $O(|E|^{-1})=O(|V|^{-2})$ of the
total flow between nodes $1$ and $2$.
Therefore, to obtain a flow that consists of a given
fraction of the maximum flow at least $O(|E|)=O(|V|^2)$ of paths
must be used.

\begin{figure}
\epsfxsize=0.3\textwidth
\hskip 0.1\textwidth
\epsfbox{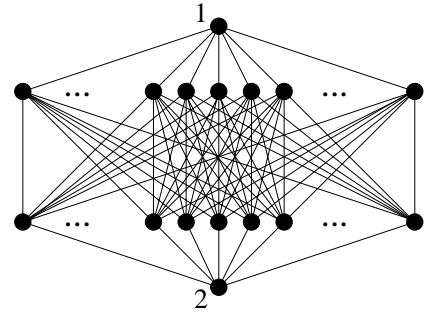}
\caption{A network demonstrating the optimality of Lemma~\ref{lem:path}.
\label{fig:net_cut}}
\end{figure}

The following theorem establishes the maximum number of paths
needed to achieve a throughput with a constant factor of that
achieved by the original flow.

\begin{theorem}
\label{thm:approx}
For every flow of size $F$ between two nodes
on a directed graph $G(V,E)$  and for every $\alpha>0$ there exists
a set of at most $\lceil\alpha|E|\rceil$ paths achieving a
flow of at least $(1-e^{-\alpha})F$.
\end{theorem}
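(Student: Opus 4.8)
The plan is to apply Lemma~\ref{lem:path} iteratively, peeling off one high-capacity path at a time. At each step, if the current residual flow has value $F_k$, Lemma~\ref{lem:path} guarantees a path carrying at least $F_k/|E|$ units; I route that much along the path, subtract it from the flow (updating the residual flow graph accordingly), and continue. This yields a geometric-type recursion: after extracting the $k$-th path, the remaining flow satisfies $F_{k+1} \le F_k(1 - 1/|E|)$, so after $m$ iterations the residual flow is at most $F(1 - 1/|E|)^m$, and the total flow carried by the $m$ extracted paths is at least $F\bigl(1 - (1-1/|E|)^m\bigr)$.

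Next I would set $m = \lceil \alpha|E|\rceil$ and bound $(1 - 1/|E|)^m$. Using the standard inequality $1 - x \le e^{-x}$ with $x = 1/|E|$ gives $(1 - 1/|E|)^m \le e^{-m/|E|} \le e^{-\alpha}$, since $m/|E| \ge \alpha$. Hence the extracted paths carry flow at least $(1 - e^{-\alpha})F$, which is exactly the claimed bound, and there are at most $\lceil \alpha|E| \rceil$ of them.

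One technical point to address is that when a path is removed, the "residual" object must still be a legitimate flow between the same two nodes so that Lemma~\ref{lem:path} applies again. This is routine: subtracting a path-flow of value $\delta$ from a flow of value $F_k$ leaves a flow of value $F_k - \delta$ satisfying conservation at all intermediate nodes, and it lives on a graph whose edge capacities are the leftover capacities; Lemma~\ref{lem:path} is stated for an arbitrary directed graph with an arbitrary flow, so it still applies. I should also note the edge case $F_k = 0$: if the residual flow ever hits zero before $m$ steps, we simply stop early with even fewer paths, and the bound only improves.

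The main obstacle — really the only place requiring care — is making the recursion airtight: after removing a path of capacity \emph{exactly} $F_k/|E|$ (or more), we want $F_{k+1} \le F_k(1 - 1/|E|)$, and then we must make sure the per-step factor $(1 - 1/|E|)$ does not degrade as capacities shrink. It does not, because Lemma~\ref{lem:path}'s guarantee is always relative to the \emph{current} flow value $F_k$, not the original $F$; this self-similarity of the bound is what makes the clean product formula go through. Everything else is the elementary estimate $1 - x \le e^{-x}$.
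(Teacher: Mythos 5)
Your proposal is correct and follows essentially the same route as the paper: iteratively apply Lemma~\ref{lem:path} to the residual flow, extract $\lceil\alpha|E|\rceil$ paths, and use $1-1/|E|\leq e^{-1/|E|}$ to bound the leftover flow by $e^{-\alpha}F$. Your write-up is in fact more careful than the paper's one-line argument (which even contains a sign typo in the exponential), particularly in checking that the residual object remains a valid flow so the lemma can be reapplied.
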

\begin{proof}
Apply Lemma \ref{lem:path}, $\lceil\alpha|E|\rceil$ times. Each time
an appropriate path is found, its flow is removed from all of its links.
For each such path the remaining flow is multiplied by at most
$1-\frac{1}{|E|}\leq\exp\left(\frac{1}{|E|}\right)$.
\end{proof}

\subsection{Competitive Algorithms}
\label{sec:dispersion-alg}

Theorem \ref{thm:approx} provides both an algorithm for reducing the
number of paths and a bound on the throughput loss. To approximate the flow
by a limited number of paths remove all edges with capacity less than
$F/|E|$ and find a remaining path. This process can be repeated to
obtain improving approximations of the flow. The maximum number of
paths needed to achieve an approximation of the flow to within a
constant factor is linear in $|E|$, while the number of possible paths
may be exponential in $|E|$.

Using the above approximation in conjunction with the competitive
algorithms one can devise two new algorithms $\competc$ and
$\competd$, based upon $\competa$ and $\competb$ respectively.
These algorithms perform similarly to $\competa$ and $\competb$,
in terms of the batching process. However, after filling each
batch, the algorithms will limit the dispersion of each flow, and
approximate the flow. To achieve a partial flow, each time we
select the widest path from remaining edges (where the weights are
determined by the link utilization in the solution of the
multicommodity flow) and reserve the total path bandwidth. We
repeat this until either the desired number of paths is reached or
the entire flow is routed. Setting the dispersion bound to $\alpha
|E|$, Theorem \ref{thm:approx} guarantees that the time allocated
for the slot will be no more than
$\mathtt{maxflow}(G,L)/(1-e^{-\alpha})$, where
$\mathtt{maxflow}(G,L)$ is the original slot duration. Algorithms
$\competc$ and $\competd$ will achieve a throughput within a
factor of $1-e^{-\alpha}$ of the maximum throughput (and will no
longer be throughput optimal as $\competa$ and $\competb$, but
only throughput competitive).

\section{Simulations}
\label{sec:simulations}
In this section, we present simulation results illustrating the
performance of the algorithms described in this paper. While the
emphasis of the previous sections was on the throughput optimality
(or competitiveness) of our proposed algorithms, here we are also
interested in evaluating their performance for other metrics,
especially average delay. Average delay is defined as the average
time elapsing from the point a request is placed until the
corresponding connection is completed.

The main points of interest are as follows: (1) how do the
competitive algorithms fare compared to the greedy approach of
Section~\ref{sec:greedy}? (2) what value of path dispersion is
needed to ensure good performance? It is important to emphasize
that we do not expect the throughput optimal competitive
algorithms to always outperform the greedy approach in terms of
average delay.

%%%%%%%%%%%%%%%%%%%%%%%%%%%%%%%%%%%%%%%%%%%%%%%%%%%%%%%%%%
\subsection{Simulation Set-Up}
\label{sec:sim_par}
%%%%%%%%%%%%%%%%%%%%%%%%%%%%%%%%%%%%%%%%%%%%%%%%%%%%%%%%%%

We have written our own in simulator in C++. The simulator uses the
COIN-OR Linear Program Solver (CLP) library \cite{coin} to solve
multi-commodity optimization problems and allows evaluating our
algorithms under various topological settings and traffic
conditions. The main simulation parameters are as follows:

\begin{figure}[t]
  \begin{center}
%    \subfigure[4 node]{
%    \resizebox{0.55in}{!}{\includegraphics{topology1}}}
%    \hspace{0.4in}
    \subfigure[8 node]{
    \label{subfig:topology1}
    \resizebox{0.75in}{!}{\includegraphics{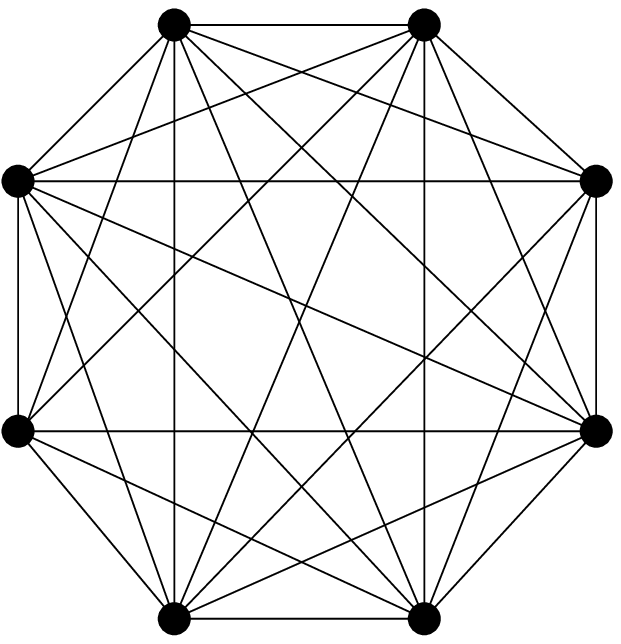}}}
    \hspace{0.2in}
    \subfigure[11 node LambdaRail]{
    \label{subfig:topology2}
    \resizebox{2in}{!}{\includegraphics{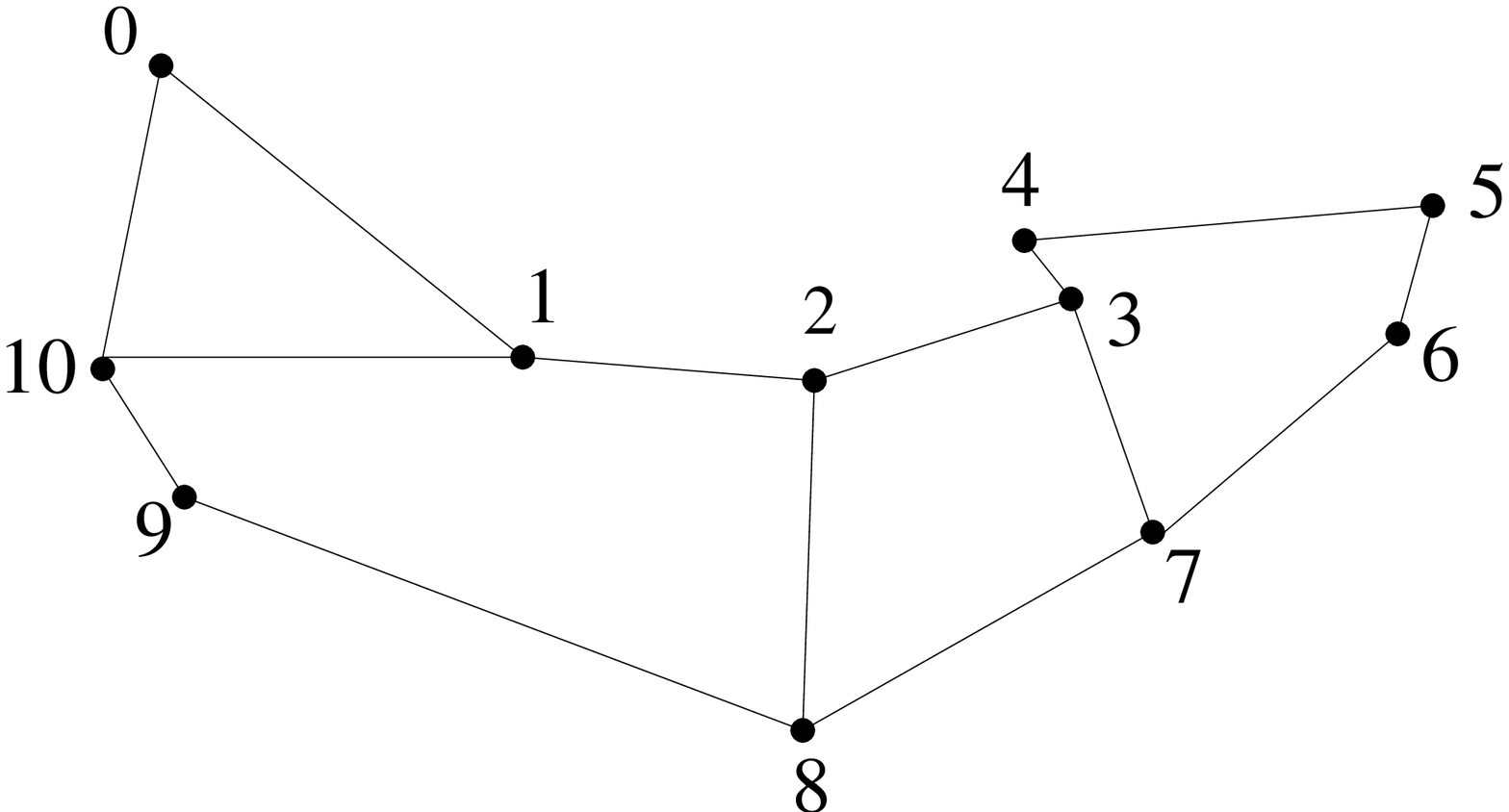}}}
    \caption{Simulation topologies.}
  \label{fig:topologies}
  \end{center}
\end{figure}
\begin{itemize}

\item \emph{Topology:} our simulator supports arbitrary
topologies. In this paper, we consider the two topologies depicted
in Figure~\ref{fig:topologies}. One is a fully connected graph
(clique) of eight nodes and the other is an 11-node topology,
similar to the National LambdaRail testbed~\cite{nlr}. Each link on
these graphs is full-duplex and assumed to have a capacity of 20
Gb/s.
\item \emph{Arrival process:} we assume that the aggregated
arrival of requests to the network forms a Poisson process (this can
easily be changed, if desired). The mean rate of arrivals is
adjustable. Our delay measurements are carried out at different
arrival rates, referred to as \emph{network load}, in units of
requests per hour.
\item \emph{File size distribution:}

We consider two models for the file size distribution:
\begin{enumerate}
\item Pareto:
$$F(x) = 1- \left( \frac {x_m}{x-\gamma} \right)^{\beta} \mbox{, where } x \ge x_m+\gamma.$$
In the simulations, we set $\beta = 2.5$, $x_m = 1.48$ TB
(terabyte) and $\gamma = 6.25*10^{-3}$ TB, implying that the mean
file size is $2.475$ TB. \item Exponential:
$$F(x) = \exp(-\lambda x) \mbox{, where } x \ge 0.$$
In the simulations, $1/\lambda=2.475$ TB.
\end{enumerate}

\item\emph{Source and Destination}: for each request, the source and
destination are selected uniformly at random, except that they must
be different nodes.

\end{itemize}
All the simulations are run for a total of at least $10^6$ requests.

%%%%%%%%%%%%%%%%%%%%%%%%%%%%%%%%%%%%%%%%%%%
\subsection{Results}
\label{sec:sim_res}
%%%%%%%%%%%%%%%%%%%%%%%%%%%%%%%%%%%%%%%%%%%

We first present simulation results for the clique topology. Figure
\ref{fig:par8-infocom} depicts the average delay as a function of
the network load for the algorithms $\competa$ and $\greedya$,
assuming a Pareto file size distribution. The figure indicates that
$\greedya$'s average delay increases sharply around
140~requests/hour, while $\competa$  can sustain higher network load
since it allocates flows more efficiently. On the other hand, at
lower load, $\greedya$ achieves a lower average delay than
$\competa$. This result can be explained by the fact that $\greedya$
does not wait to establish a connection. This results illustrates
the existence of a delay-throughput trade-off between the two
algorithms.

\begin{figure}[t]
\centering \resizebox{2.9in}{!}{\includegraphics{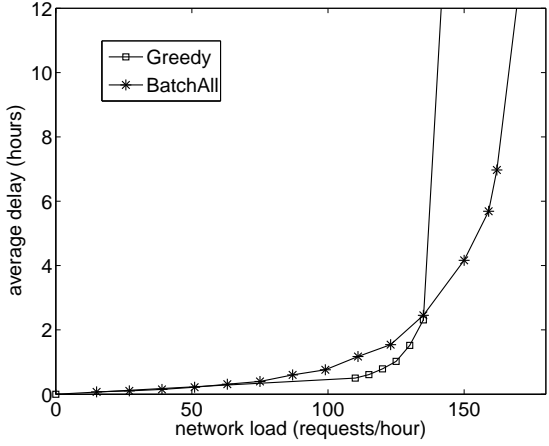}}
\caption{Performance comparison of algorithms $\competa$ and
$\greedya$ for the 8-node clique and Pareto file size distribution.}
\label{fig:par8-infocom}
\end{figure}

Figure \ref{fig:exp8-infocom-1}  compares the performance of the
$\competa$ and $\competb$ algorithms. The file size distribution
is exponential. The figure shows that $\competb$ is less efficient
than $\competa$ in terms of average delay,  especially at low
load. This result is somewhat expected given that $\competb$ uses
a less efficient batching process and its delay ratio is looser.
However, since $\competb$ is throughput optimal, its performance
approaches that of $\competa$ at high load.

\begin{figure}[t]
\centering \resizebox{2.9in}{!}{\includegraphics{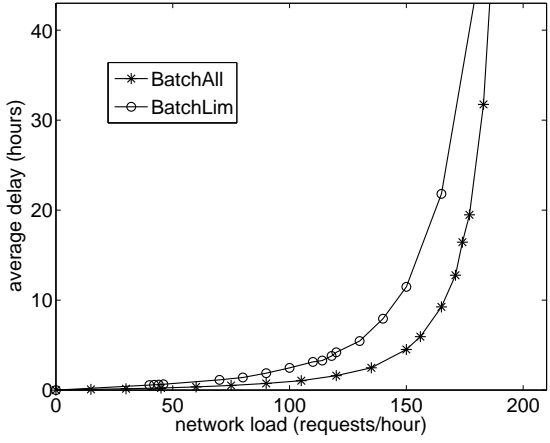}}
\caption{Performance comparison of algorithms $\competa$ and $\competb$
 for the 8-node clique and exponential file size distribution.}
 \label{fig:exp8-infocom-1}
\end{figure}

Figure \ref{fig:exp8-infocom-2}  evaluates and compares the
performance of the $\competa$ and $\competc$ algorithms. For
$\competc$, results are presented for the cases where the path
dispersion bound is set to either 1, 3, or 5 paths per connection.
The file size distribution is exponential. The figure shows also a
fluid bound on capacity derived in~\cite{FCS07}. Its value
represents an upper bound on the maximum network load for which
the average delay of requests is still bounded.

\begin{figure}[t]
\centering \resizebox{2.9in}{!}{\includegraphics{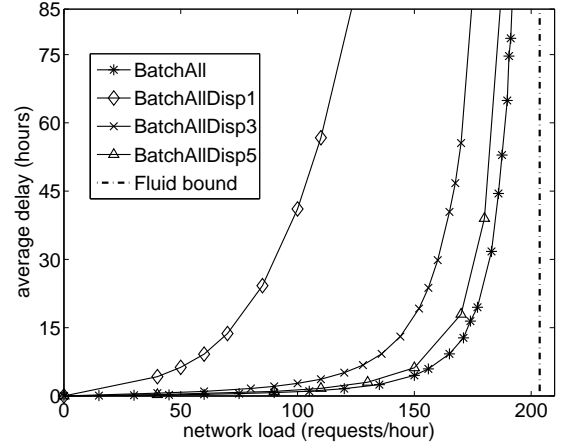}}
\caption{Performance evaluation of algorithms $\competa$
and $\competc$ with bounds 1, 3 or 5 paths per connection for the
8-node clique and exponential file size distribution. A fluid bound on
capacity is also depicted.}
 \label{fig:exp8-infocom-2}
\end{figure}

The figure shows that $\competa$ approaches the capacity bound at
a reasonable delay value and that a path dispersion of~$5$ per
 connection (corresponding to $\alpha = 0.089$) is sufficient for
$\competc$ to achieve performance close to $\competa$. It is worth
mentioning that, in this topology, there exist $1957$ possible
paths between any two nodes. Thus, with~5 paths, $\competc$ uses
only $0.25 \%$ of the total paths possible. The figure also
demonstrates the importance of multi-path routing: the performance
achieved using a single path per connections is far worse.

\begin{figure}[t]
\centering \resizebox{2.9in}{!}{\includegraphics{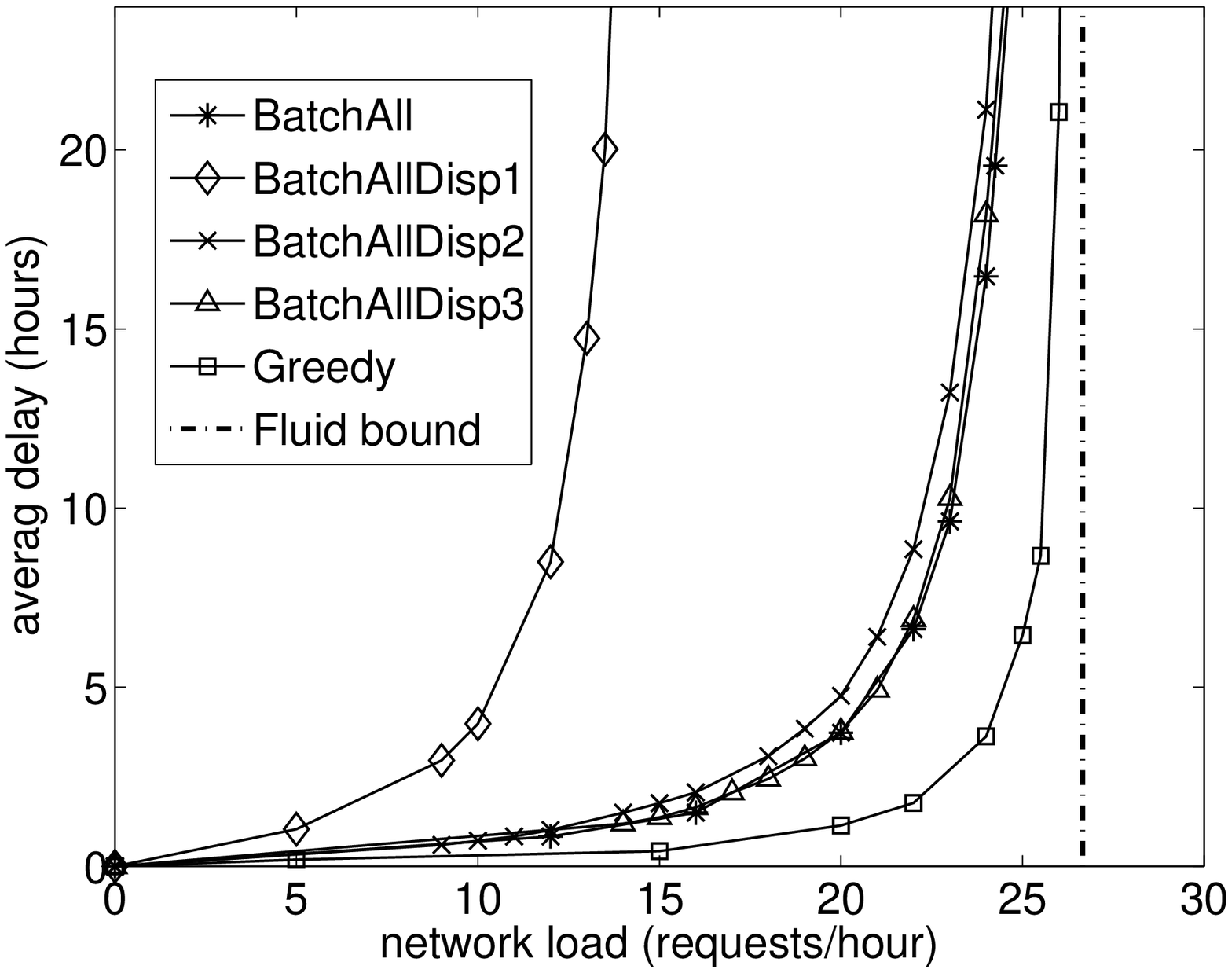}}
\caption{Performance evaluation of algorithms $\competa$, $\competc$
and $\greedyb$ for the 11-node topology \ref{subfig:topology2} and Pareto file size.
Algorithm $\competc$ is plotted with bounds of 3, 2
 and 1 on path dispersion.} \label{fig:par11-infocom}
\end{figure}

Figure \ref{fig:par11-infocom} depicts the performance of the
various algorithms and the fluid bound for the 11-node topology of
figure ~\ref{subfig:topology2}. The file size follows a Pareto
distribution. In this case,  we observe that $\competc$ with as
few as 3 paths per connection (or $\alpha = 0.107$) approximates
$\competa$ very closely. Since this network is sparser than the
previous one, it is reasonable to obtain good performance with a
smaller path dispersion. Another observation is that for this
topology and the range of network loads under consideration,
$\greedya$ outperforms $\competa$ in terms of average delay.
Although $\competa$ achieves the optimal saturation throughput,
its superiority over  $\greedya$ may only happen at very large
delays. The advantage of $\greedya$ in this scenario can be
explained by the sparsity of the underlying topology. Seemingly,
the added efficiency in throughput achieved with batching is not
sufficient to compensate the cost incurred by delaying requests.

\section{Conclusion}
\label{sec:summary} Advance reservation of dedicated network
resources is emerging as one of the most important features of the
next-generation of network architectures. In this paper, we have
made several advances in this area. Specifically, we have proposed
two new on-line algorithms for advanced reservation, called
$\competa$ and $\competb$, that are guaranteed to achieve optimal
throughput performance. We have proven that both algorithms
achieve a competitive ratio on the maximum delay until job
completion in $1+\epsi$ bandwidth augmented networks. While
$\competb$ has a slightly looser competitive ratio than that of
$\competa$ (i.e., $4/\epsilon$ instead of $2/\epsilon$), it has
the distinct advantage of computing the completion time of a
connection immediately as a request is placed.

A key observation of this paper is that path dispersion is
essential to achieve full network utilization. However, splitting
a transmission into too many different paths may render a
flow-based approach inapplicable in many real-world environments.
Thus, we have presented a rigorous, theoretical approach to
address the path dispersion problem and presented a method for
approximating the maximum multicommodity flow using a limited
number of paths. Specifically, while the number of paths between
two nodes in a network scales exponentially with the number of
edges, we have shown that throughput competitiveness up to any
desired ratio factor can be achieved with a number of paths
scaling linearly with the total number of edges. In practice, our
simulations indicate that 3 paths (in sparse graphs) to 5 paths
(in dense graphs) may be sufficient.

As shown in our simulations, for some topologies, a greedy
approach may perform better than the competitive algorithms,
especially at low load, in sparse topologies, or with uncorrelated
traffic between different pairs. However, we have shown that there
exist scenarios where greedy strategies can be highly inefficient.
This can never be the case for the competitive algorithms.

We conclude by noting that the algorithms proposed in this paper
can be either run in a centralized fashion (a reasonable solution
in small networks) or using link-state routing and distributed
signaling mechanism, such as enhanced versions of
GMPLS~\cite{dragon} or RSVP~\cite{ATM}. Distributed approximations
of the multicommodity flow have also been discussed extensively in
the literature (see, e.g., \cite{AL94,KPP95,AKS07}). Part of our
future work will be to investigate these implementation issues
into more detail.

\bibliographystyle{IEEEtran}
\bibliography{arxiv}

\end{document}